\newcommand{\BO}[1]{{ O}\left(#1\right)}
\newcommand{\BTO}[1]{\tilde{ O}\left(#1\right)}
\newcommand{\SO}[1]{{o}\left(#1\right)}
\newcommand{\BT}[1]{{\Theta}\left(#1\right)}
\newcommand{\BOM}[1]{\Omega\left(#1\right)}
\newtheorem{lemma}{Lemma}
\newtheorem{theorem}{Theorem}
\newtheorem{corollary}{Corollary}
 \newcommand{\xif}{{\bf{\em{if~}}}}
 \newcommand{\xthen}{{\bf{\em{then~}}}}
 \newcommand{\xfor}{{\bf{\em{for~}}}}
 \newcommand{\xto}{{\bf{\em{to~}}}}
 \newcommand{\xdo}{{\bf{\em{do~}}}}
 \newcommand{\xand}{{\bf{\em{and~}}}}
 \newcommand{\T}{\hspace{0.5cm}}
\def\func#1{\textrm{\bf{\sc{#1}}}}
\newcommand{\nvs}{\vspace{0cm}}
\newcommand{\hide}[1]{}
\date{}
\author[1]{Rezaul Chowdhury}
\author[2]{Francesco Silvestri}
\author[3]{Flavio Vella}
\affil[1]{Stony Brook University, USA, rezaul@cs.stonybrook.edu}
\affil[2]{University of Padova, Italy, silvestri@dei.unipd.it}
\affil[3]{Free University of Bozen, Italy, flavio.vella@unibz.it}
\begin{document}

\title{A Computational Model for Tensor Core Units}
\maketitle

\begin{abstract}
To respond to the need of efficient training and inference of deep neural networks, a plethora of domain-specific hardware architectures have been introduced, such as Google Tensor Processing Units and NVIDIA Tensor Cores. A common feature of these architectures is a hardware circuit for efficiently computing a dense matrix multiplication of a given small size.

In order to broaden the class of algorithms that exploit these systems, we propose a computational model, named the TCU model, that captures the ability to natively multiply small matrices. We then use the TCU model for designing fast algorithms for several problems, including  matrix operations (dense and sparse multiplication, Gaussian Elimination), graph algorithms (transitive closure, all pairs shortest distances), Discrete Fourier Transform, stencil computations, integer multiplication, and polynomial evaluation. We finally highlight a relation between the TCU model and the external memory model.
\end{abstract}

\section{Introduction}
Deep neural networks are nowadays used in several application domains where big data are available, and have led to breakthroughs, such as reducing word error rates in speech recognition by 30\% over traditional approaches~\cite{Dean16} and cutting the error rate in an image recognition competition  from 26\% to 3.5\%~\cite{He16}.
The huge data set size, although crucial for improving  neural network quality, rises performance issues during the training and inference steps.
To respond to the increasing computational needs, domain-specific hardware accelerators have been introduced by several IT firms,  such as Google Tensor Processing Units~\cite{jouppi2017datacenter}, NVIDIA Tensor Cores~\cite{nvidia-volta}, Intel KNL's AVX extensions~\cite{Intel}, Apple Neural Engine~\cite{Apple}, and ARM's Machine Learning Processor~\cite{ARM} among the others.
These compute units have been specifically developed for some deep learning models, such as  multilayer perceptrons, convolutional neural networks, and recurrent neural networks.

These accelerators significantly vary in hardware architectures; however, they share  circuits to efficiently multiply small and dense matrices of fixed size.
Indeed, matrix multiplication is one of the most important computational primitives in deep learning.
By using the terminology introduced in~\cite{Dakkak19}, we refer to all accelerators supporting hardware dense matrix multiplication as \emph{Tensor Core Units (TCUs)} (or simply tensor units).
By focusing on a specific computational problem, namely matrix multiplication, TCUs exhibit at the same time both high performance and low energy consumption with respect to traditional CPU or GPU approaches~\cite{jouppi2017datacenter}. 

TCUs are becoming the mainstream technology for deep learning, with constantly decreasing economic costs and a tighter integration with the main processing unit.
Although TCUs were developed for domain-specific problems, it would be interesting and profitable to extend their application domain, for instance by targeting problems from linear algebra, data mining or machine learning (other than deep learning).
A similar scenario appeared with the introduction of GPUs: 
introduced in the 2000s for accelerating computer graphics (in primis, video games), GPUs have been used since then for very different computational problems, like bioinformatics~\cite{Nobile16}, data mining~\cite{Bohm2009}, and neural networks~\cite{Shi16}. 
\emph{Will TCUs have the same wide impact of GPUs?}

The goals of this paper are to present a framework for designing and analyzing efficient algorithms for TCUs,
and to expand the class of algorithms that exploit  TCUs.
We first introduce in Section~\ref{sec:model}  a computational model for tensor core units, which we call \emph{$(m,\ell)$-TCU}, that captures the main features of tensor units:
\begin{enumerate}
    \item \textbf{High performance matrix multiplication.} For a given model parameter $m\geq 1$, two matrices of size $\sqrt{m}\times \sqrt{m}$ can be multiplied in $\BO{m}$ time by using the  hardware circuit in tensor units.
    \item \textbf{Latency cost.} The model parameter $\ell$ captures the latency cost for setting up the tensor unit and preparing the input/output operands. 
    \item \textbf{Asymmetric behavior.} Some tensor units can  efficiently process a left matrix with a large number $n$ of rows (i.e., a tall left matrix). 
    Thus, we let the $(m,\ell)$-TCU to natively multiply an $n \times \sqrt{m}$ matrix by a $\sqrt{m}\times \sqrt{m}$ matrix,
 without splitting the left matrix into submatrices of size $\sqrt{m}\times \sqrt{m}$. The multiplication requires $\BO{n\sqrt{m}}$ time and we let $n\geq \sqrt{m}$ to be a user defined value.
\end{enumerate}

In Section~\ref{sec:algos}, we design several algorithms that exploit tensor accelerators and analyze their performance on the $(m,\ell)$-TCU model. 
More specifically, we show how to compute some matrix operations (dense and sparse multiplication, Gaussian Elimination), graph problems (transitive closure, all pairs shortest distances), the Discrete Fourier Transform, a class of stencil computations, integer multiplication, and polynomial evaluation. 
These algorithms give evidence that TCUs can be potentially used for different computational problems, in addition to deep neural networks.
Finally in Section~\ref{sec:emem}, we observe that some lower bounds on the I/O complexity in the external memory model~\cite{Vitter06} translate into lower bounds on the running time in the TCU model.

We observe that, from a theoretical point of view, TCUs will not be needed if it is developed an algorithm for multiplying two $\sqrt{n}\times \sqrt{n}$ matrices in $\BO{n}$ time (i.e., $\omega = 2$). 
However, from a more realistic point of view, it is very unlikely that such algorithm will have experimental performance equivalent to the state of the art, in particular of hardware implementations.
A rigorous approach to TCUs is an important step to fully exploit tensor accelerators and to further improve the performance of algorithms, but also to better understand the generality of matrix multiplication as computational primitive.

\subsection{Previous results} 
\paragraph{Tensor Core Units.} The literature on tensor core units has mainly focused on architectural issues, see e.g.~\cite{jouppi2017datacenter,zhu2018,Reagen17}. 
Some works, like~\cite{Markidis18,Raihan19}, have investigated the programming model and performance of deep neural networks workloads in the NVIDIA Tensor Cores.

To the best of our knowledge, the only papers that broaden the class of algorithms expressible as TCU operations are~\cite{Dakkak19,carrasco19,Sorna18}.
The papers~\cite{Dakkak19,carrasco19} design algorithms for scanning and reduction that exploit NVIDIA tensor cores.
In~\cite{Sorna18}, it is shown how to speed up the Discrete Fourier Transform (DFT)  by exploiting the half precision multiplication capability of NVIDIA tensor cores. The algorithm in~\cite{Sorna18} uses the Cooley-Tukey algorithm where DFTs of size 4 are computed using  tensor cores, and it is a special case of the TCU algorithm proposed in this paper in Section~\ref{sec:FFT}. 
However, none of the previous works has proposed a rigorous method for studying how to accelerate algorithms with TCUs.

\paragraph{Matrix multiplication}
From a practical point of view, the most efficient algorithms for dense matrix multiplication are those based on the definition of matrix multiplication and which require $\BT{n^{3/2}}$ operations for multiplying two $\sqrt{n}\times \sqrt{n}$ matrices (see e.g. the BLAS library).
Nevertheless from a theoretical point of view, several papers have been investigating algorithms requiring $\BO{n^{\omega/2}}$ operations for some $\omega<3$. 
The work of Strassen~\cite{Strassen69} showed that $\omega\leq 2.81$, and then subsequent works have been improving the upper bound on $\omega$, up to  the current best result $\omega<2.3728$~\cite{Williams12,LeGall14}.

Some works, like~\cite{BjorklundPWZ14,Riko15}, investigate how to use fast matrix multiplication to compute  problems like triangle listing  and sparse matrix multiplication.
The results in~\cite{BjorklundPWZ14} show how to list $t$ triangles in a graph with $m$ edges in $\BO{m^{2\omega/(\omega+1)} + m^{3(\omega-1)/(\omega+1)}t^{(3-\omega)/(\omega+1)}}$ time. 
In~\cite{Riko15}, it is  shown how to compute a sparse matrix multiplication in time $\BTO{\sqrt{n}Z^{(\omega-1)/2}+I}$, where $I$ is the number of non-zeros in the input $\sqrt{n}\times \sqrt{n}$ matrices and $Z$ is the number of non-zero in the output matrix.
These algorithms automatically work in the $(m,\ell)$-TCU model by replacing the $\BO{n^{\omega/2}}$ running time of fast matrix multiplication in the RAM model, with the $\BO{n^{\omega/2(m+\ell)}/m^{\omega/2-1}}$ running time in the TCU model (see Theorem~\ref{thm:mmult}).

\section{Preliminaries}\label{sec:prelim}

\subsection{Technical overview on some TCUs}\label{sec:technical}
We now briefly describe the main characteristics of the most relevant hardware accelerators for deep learning: Google's Tensor Processing Unit (TPU)~\cite{jouppi2017datacenter} and  NVIDIA's Tensor Cores~\cite{nvidia-volta} (TCs). They are used to accelerate convolution layers and the related matrix multiplication operations, which represent the most computationally expensive part of deep learning applications.

The Tensor Processing Unit (TPU) is an application-specific integrated circuit developed for accelerating the inference phase of deep neural networks. A TPU consists of an ALU Matrix Multiply Unit (MMU)  and two on-chip memories, called Unified Buffer and Weight Memory.
TPU has been designed as an accelerator to plug into a traditional server as GPUs do through PCIe I/O bus. 
Data are sent from the CPU host memory to the local TPU memories with the goal of offloading all the computation. 
The MMUs  are composed by $256 \times 256$ $8$-bit multiplier-accumulator units, where the $16$-bit products are stored in $32$-bit accumulators. 
A systolic execution of the units reduces the overhead and maximizes the throughput, with $256$-element partial sum per clock cycle. 
Briefly, the typical TPU workflow is summarized as follows:
(1) Read  a variable-sized $n\times 256$ input, with $n<96$K from the CPU host memory into the Unified Buffer memory;
(2) Read a $256 \times 256$ matrix into Weight Memory and then into the MMUs;
(3) Perform a matrix multiplication;
(4) Write the $n\times 256$ output from the TPU (Unified Buffer) to the CPU host memory.
We observe that, using our notation, step 1 reads the left matrix $A$, while step 2 reads matrix $B$.

In the NVIDIA "Volta" architecture, Tensor Cores extend the traditional GPU architectures and the parallel programming interface (CUDA) by providing dedicated units to efficiently perform dense matrix  multiplication. 
The Volta micro-architecture revised NVIDIA Streaming Multiprocessors (SM) design: the SM consists of 
$4$ processing blocks. Each block contains: $2$ Tensor Cores, $8$ Floating Point (FP) units operating at $64$-bit, $16$ FP operating at $32$-bit, $8$ Integer Unit operating at $32$-bit and one Special Function Unit. 
Concerning the memory hierarchy, the $L1$ cache and the shared memory are located in the same in-chip surface. The $L2$ is also included in the same die and it is accessed among multiple SMs. The High Bandwidth Memory (HBM) can be addressed by a $4096$-bit memory interface. 
Tensor Cores can perform $64$ floating-point Fused-Multiply-Add (FMA) operations in one cycle. 
FMA operates in half-precision ($16$-bit) and optionally stores a $32$-bit output by using a sum accumulator. Such design can deliver a peak of $31.4$ Tera Floating Point Operations Per Second (FLOPS) with half precision on a Tesla V100 ($640$ Tensor Cores spread over $80$ SM). 
From the programming point of view, although TC basically performs one matrix-multiply and accumulate operation on $4 \times 4$ matrices, it is possible to multiply a $16 \times 16$ matrix at programming level with one CUDA warp ($32$ threads). 
This requires primitives for data loading and the synchronization of the result from/to registers through load and store units. 

\subsection{Systolic algorithms for matrix multiplication}
The circuits which implement matrix multiplication in the Google TPU and in the NVIDIA TC adopt a systolic algorithm for matrix multiplication.
A systolic algorithm is an algorithm for a systolic array, that is a network of processing elements (PEs)
that rhythmically compute and pass data through the system~\cite{Leighton91}. 

For the sake of completeness, we formalize the systolic algorithm implemented in the Google TPU~\cite{jouppi2017datacenter}. 
The implementation on NVIDIA  TCs is slightly different but shares the same high level structure, and we refer to~\cite{Leighton91} for a more complete overview of systolic algorithms.
The systolic algorithm is implemented on a 2-dimensional array of $m$ PEs, and we denote with $p_{i,j}$ the PE at row $i$ and column $j$, for each $0\leq i,j <\sqrt{m}$.
Let $A$ and $B$ be the two $\sqrt{m}\times \sqrt{m}$ input matrices, and let $C=A\cdot B$ be the $\sqrt{m}\times \sqrt{m}$ output matrix; we denote with $a_{i,j}, b_{i,j}, c_{i,j}$ the entry in row $i$ and column $j$ of $A$, $B$, $C$ respectively; for notational simplicity, we let $a_{i,j}=0$ if $i,j<0$ or $i,j\geq \sqrt{m}$.
The algorithm works as follows (see also Figure~\ref{fig:systolic}):

\begin{figure}[t]
\centering
\includegraphics[scale=0.35]{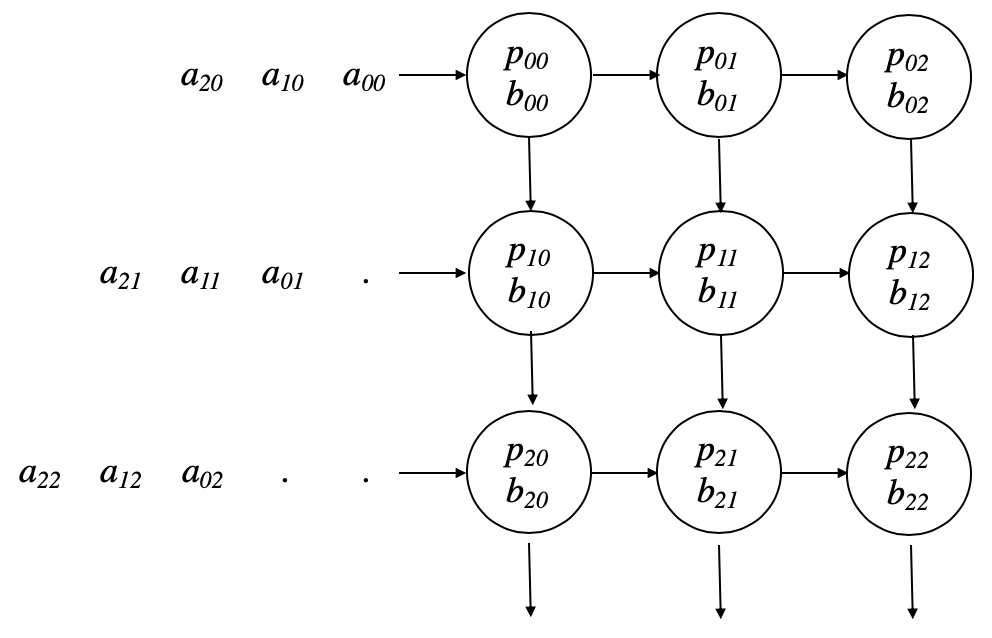}
\caption{A 3x3 systolic array after loading matrix $B$.\label{fig:systolic}} 
\end{figure}

\begin{itemize}
    \item In the first $\sqrt{m}$ steps, matrix $B$ is pushed within the $m$ PEs so that $p_{i,j}$ contains $b_{i,j}$. 
    \item The algorithm then executes $3\sqrt{m}$ steps.
In each step $k$, with $0\leq k<2\sqrt{m}$, each PE $p_{i,j}$ receives: 1) an entry $a$ of $A$ from the left PE $p_{i,j-1}$ or the input $a_{k-i,i}$ if $j=0$; 2) a partial sum $c$ from the top PE $p_{i-1,j}$, or we set $c=0$ if $i=0$.
Then, each  $p_{i,j}$ computes $c = c + a\cdot b_{i,j}$ (recall that $b_{i,j}$ is in the local memory of $p_{i,j}$.
Finally $p_{i,j}$ forwards $a$ to the right PE ($p_{i,j+1}$, if any) and $c$ to the bottom PE ($p_{i+1,j}$ or it is output if $i=\sqrt{n}-1$).
    \item We observe that each  $p_{\sqrt{m}-1, j}$ outputs $c_{i,j}$ at the end of step $k=\sqrt{m}+i+j$. 
\end{itemize}

We observe that the algorithm can be extended to compute $C = A\cdot B$ where $A$ is an $n\times \sqrt{m}$ matrix and $B$ is a $\sqrt{m}\times \sqrt{m}$ matrix, by just continuing pumping all rows within the system.
This feature is not available in the NVIDIA implementation since matrix $B$ does not reside in the local PE memories, but it is percolated within the array as matrix $A$. 

\section[The (m,l)-TCU model]{The $(m,\ell)$-TCU model}\label{sec:model}
We propose a computational model for tensor core units that captures the following three properties.
\paragraph{1: Matrix acceleration.} The hardware circuits  implement a parallel algorithm to multiply two matrices of a fixed size, and the main cost is dominated by reading/writing the input and output matrices. 
    For a given hardware parameter $m$, we have  that the multiplication of two matrices $A$ and $B$ of size $\sqrt{m}\times \sqrt{m}$ are implemented in  time $\BO{m}$. 
    With  time, we mean the running time as seen by the CPU clock and it should not be confused with the total number of operations executed by the unit, which is always $\BT{m^{3/2}}$. Indeed, no existing tensor unit implements fast matrix multiplication algorithms, as for instance Strassen.
    The matrix multiplication operation is called by an instruction specifying the address (in memory) of the two input matrices and of the output matrix where the result will be stored;  data will  be loaded/stored by the tensor unit.
    
    \paragraph{2: Latency cost.} A call to the tensor unit has a latency cost. 
  As the state of the art tensor units use systolic algorithms, the first output entry is computed after $\BOM{\sqrt{m}}$ time. 
    There are also initial costs associated with  activation, which can significantly increase when the unit is not connected to the CPU by the internal system bus or is shared with other CPUs.
    We thus assume that the cost of the multiplication of two matrices  of size $\sqrt{m}\times \sqrt{m}$ is $\BO{m+\ell}$, where $\ell>0$ is the latency cost.

\paragraph{3: Asymmetric behavior.} As tensor units are designed for improving training and inference in deep networks, the two matrices in the multiplication $A \times B$ are managed differently. Matrix $B$ represents the model (i.e., the weights of the deep neural network), while the rows of  matrix $A$ represent the input vectors to be evaluated.  
As the same  model can be applied to $k$ vectors, with $n >> \sqrt{m}$, it is possible to first load the weights in $B$ and then to stream the $n$ rows of $A$ into the tensor unit (possibly in chunks of $\sqrt{m}$ rows), reducing thus latency costs. 
Thus, we assume in our model that two matrices of size $n\times \sqrt{m}$ and $\sqrt{m}\times \sqrt{m}$ are multiplied in time $\BO{n\sqrt{m}+\ell}$, where the number $n$ of rows is defined by the algorithm and $n\geq \sqrt{m}$.\\

More formally, we define the \emph{Tensor Computing Unit (TCU) model} as follows. 
The \emph{$(m,\ell)$-TCU} model is a standard RAM model where the CPU contains a circuit, named tensor unit, for performing a matrix multiplication $A\times B$ of size $n \times \sqrt{m}$ and $\sqrt{m}\times \sqrt{m}$ in time $\BO{n\sqrt{m}+\ell}$, where $m\geq 1$ and $\ell\geq 0$ are two model parameters and $n\geq \sqrt{m}$ is a value (possibly input dependent) specified by the algorithm. 
The matrix operation is initialized by a (constant size) instruction containing the addresses in memory of the two input matrices $A$ and $B$, of the output matrix $C$, and the row number $n$ of $A$.
The \emph{running time} (or simply time) of a TCU algorithm is given by the total cost of all operations performed by the CPU, including all calls to the tensor unit.
We assume no concurrency between tensor unit, memory and CPU, and hence at most one component is active at any time.
Each memory (and TPU) word consists of $\kappa$ bits (in general, we denote $\kappa = \BOM{\log n}$ where $n$ is the input size, that is enough for storing the input size in one word.

\subsection{Discussion on the model}\label{sec:disc}
In this preliminary work, our goal is to understand how to exploit circuits of fixed size for matrix multiplication.
We then do not include in the model some characteristics of existing hardware accelerators, like limited numerical precision and parallel tensor units.
In particular, the modeling of only a single tensor unit can be seen as a major weakness of our model since existing boards contain a large number of TPUs (e.g., more than 500 TC in the Nvidia Titan RTX). 
However, we believe that the first step to exploit tensor accelerators is to investigate which problems can benefit of matrix multiplication circuits; we have then opted for a simple model with only a TCU. 
Moreover, existing hardware accelerators use different parallel architectures and interconnection networks, while they agree on matrix multiplication as main primitive.
We now make some considerations on how Google TPUs and NVIDIA TCs fit our model.

In the Google TPU (in the version described in~\cite{jouppi2017datacenter}), the right matrix $B$ has size $256\times 256$ words (i.e., $m= 65536$). 
The left matrix $A$ is stored in the local unified buffer of $96k\times 256$ words; thus, TPUs can compute the product between two matrices of size $96$k$\times 256$ and  $256\times 256$ in one (tensor) operation.
The number of rows of the left matrix in the TCU model is a user defined parameter (potentially a function of the input size); on the other hand, the number of rows of the left matrix in the TPU is user defined but it is upper bounded by a hardware-dependent value (i.e., 96K). Being this bound quite large, a TPU better exploits a tall left matrix than a short one, as in our TCU model.
The systolic array works in low precision with $8$ bits per word ($\kappa = 8$).
The bandwidth between CPU and TPU was limited in the first version (16GB/s), but it is significantly higher in more recent versions (up to $600$ GB/s). 
Although TPU has a quick response time, the overall latency is high because the right hand matrix has to be suitably encoded via a TensorFlow function before loading it within the TPU: in fact, the TPU programming model is strongly integrated with TensorFlow, and it does not allow to use bare matrices as inputs.
The high latency cost might mitigate the fact that our model does not capture limited bandwidth.

The programming model of the NVIDIA Volta architecture allows one to multiply matrices of size $16\times 16$, although the basic hardware unit works on $4\times 4$ matrices; we thus have $m=256$. 
Memory words are of $\kappa =16$ bits.
TCs exhibit high bandwidth and low latency, as data are provided by a  high bandwidth memory shared with the GPU processing units. 
Matrices $A$ and $B$ can be loaded within TCs without a special encoding as in TPUs, since the NVIDIA Volta natively provides support for matrix multiplication. 
Finally we observe that, as TCs are within a GPU, any algorithm for TCs has also to take into account GPU computational bottlenecks (see e.g.~\cite{KarsinWCIS18,AfshaniS15}).

\section{Algorithms}\label{sec:algos}

\subsection{Matrix multiplication}

\subsubsection*{Dense matrix multiplication}
A Strassen-like algorithm for matrix multiplication is defined in~\cite{Ballard13} as a recursive algorithm that utilizes as base case an algorithm $\mathcal A$ for multiplying two $\sqrt{n_0}\times \sqrt{n_0}$ matrices using $p_0$ element multiplications and $\BO{n_0}$ other operations (i.e., additions and subtractions); we assume $n_0=\BO{p_0}$.
Given two $\sqrt{n}\times \sqrt{n}$ matrices with $n>n_0$, a Strassen-like algorithm envisions the two $\sqrt{n}\times \sqrt{n}$ matrices as two matrices of size  $\sqrt{n_0}\times \sqrt{n_0}$ where each entry is a submatrix of size  $\sqrt{n/n_0}\times \sqrt{n/n_0}$: then, the algorithm recursively computes $p_0$ matrix multiplications on the submatrices (i.e., the $p_0$ element multiplications in $\mathcal A$) and then performs $\BO{n}$ other operations.
For given parameters $p_0$ and $n_0$, the running time of the algorithm is 
$T(n) = \BO{n^{\omega_0}}$,  where\footnote{We observe that $\omega_0$ corresponds to $\omega/2$, where $\omega$ is the traditional symbol used for denoting the exponent in fast matrix multiplication algorithms.} $\omega_0 = \log_{n_0} p_0$.
By setting $n_0=4$ and $p_0 = 8$, we get the standard matrix multiplication algorithm ($\omega_0=3/2$), while with  $n_0=4$ and $p_0=7$ we get the Strassen algorithm ($\omega_0=\log_4 7 \sim 1.403$). 
Any fast matrix multiplication algorithm can be converted into a Strassen-like algorithm~\cite{Raz03}.

The TCU model can be exploited in Strassen-like algorithms by ending the recursion as soon as a subproblem fits the tensor unit: when $n \leq m$, the two input matrices are loaded in the tensor unit and the multiplication is computed in $\BO{m}$ time.
We assume $m\geq n_0$, otherwise the tensor unit would not be used.

\begin{theorem}\label{thm:mmult}
Given a Strassen-like algorithm with parameters $n_0$ and $p_0$, then there exists a TCU algorithm that multiplies two $\sqrt{n}\times \sqrt{n}$ matrices on an $(m,\ell)$-TCU model, with $m\geq n_0$, in time
$$
T(n) = \BO{\left(\frac{n}{m}\right)^{\omega_0}(m+\ell)}.
$$
\end{theorem}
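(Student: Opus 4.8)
The plan is to run the given Strassen-like algorithm recursively but to truncate the recursion at the level where a subproblem first fits the tensor unit, delegating the base-case multiplication to a single tensor-unit call. Concretely, to multiply two $\sqrt{n}\times\sqrt{n}$ matrices with $n>m$, I would view them as $\sqrt{n_0}\times\sqrt{n_0}$ block matrices whose blocks have size $\sqrt{n/n_0}\times\sqrt{n/n_0}$, issue the $p_0$ recursive block-multiplications prescribed by the base algorithm $\mathcal{A}$, and combine the results with the $\BO{n}$ block additions and subtractions. When $n\leq m$ the two matrices have side length $\sqrt{n}\leq\sqrt{m}$, so I would pad them to $\sqrt{m}\times\sqrt{m}$ and perform one tensor-unit multiplication, which by the model costs $\BO{m+\ell}$ (the padding and copying are absorbed in the $\BO{m}$ term). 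The hypothesis $m\geq n_0$ guarantees that the base case is reached strictly below the top level, so the tensor unit is actually used.

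This yields the recurrence $T(n)=p_0\,T(n/n_0)+\BO{n}$ for $n>m$, with base case $T(n)=\BO{m+\ell}$ for $n\leq m$, which I would solve by a recursion-tree argument. The recursion has depth $L=\lceil\log_{n_0}(n/m)\rceil$, and since $p_0=\BO{n_0}$ is a constant the number of leaves is $p_0^{L}=\BT{(n/m)^{\omega_0}}$, where $\omega_0=\log_{n_0}p_0$. Each leaf contributes $\BO{m+\ell}$, giving a total leaf cost of $\BO{(n/m)^{\omega_0}(m+\ell)}$; note in particular that the latency $\ell$ is paid once per leaf, producing exactly the $(n/m)^{\omega_0}\ell$ term of the claimed bound.

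It then remains to bound the additive work summed over the internal levels. At level $i$ there are $p_0^{i}$ subproblems each of size $n/n_0^{i}$, so the additive work on that level is $\BO{p_0^{i}\cdot n/n_0^{i}}=\BO{n\,(p_0/n_0)^{i}}$. Writing $p_0/n_0=n_0^{\omega_0-1}$ and summing the geometric series over $i=0,\dots,L-1$, the total is $\BO{n\,(p_0/n_0)^{L}}=\BO{n\,(n/m)^{\omega_0-1}}=\BO{(n/m)^{\omega_0}m}$, which is absorbed into the leaf cost. Adding the two contributions gives $T(n)=\BO{(n/m)^{\omega_0}(m+\ell)}$.

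The key observation that makes this work — and the step I expect to require the most care — is that the recursion tree is \emph{bottom-heavy}. For every nontrivial Strassen-like algorithm one has $\omega_0=\log_{n_0}p_0>1$ (equivalently $p_0>n_0$), so $p_0/n_0=n_0^{\omega_0-1}>1$ and the per-level additive work grows geometrically toward the leaves; the whole sum is therefore dominated by its last term rather than by the root, which is precisely why the additive cost collapses to the same $(n/m)^{\omega_0}$ factor as the tensor-unit calls. I would also double-check the boundary bookkeeping — padding a subproblem of side at most $\sqrt{m}$ up to $\sqrt{m}\times\sqrt{m}$, and the rounding in $L=\lceil\log_{n_0}(n/m)\rceil$, which only inflates the leaf count by the constant factor $p_0$ — since these are the places where an off-by-constant slip could otherwise creep in.
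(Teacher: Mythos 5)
Your proposal is correct and follows essentially the same route as the paper: truncate the Strassen-like recursion once a subproblem fits the tensor unit, solve the recurrence $T(n)=p_0T(n/n_0)+\BO{n}$ with leaves costing $\BO{m+\ell}$, and observe that the bottom-heavy geometric sum of additive work is absorbed into the $(n/m)^{\omega_0}m$ leaf cost. The only (immaterial) difference is the cutoff: the paper stops at $m\leq n\leq mn_0$ and charges the base case $\BO{n^{3/2}/m^{1/2}+(n/m)\ell}=\BO{m+\ell}$ via blocked multiplication, while you recurse one level further to $n\leq m$ and pad, which changes the leaf count only by the constant factor $p_0$, as you note.
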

\begin{proof}
The running time is given by the following simple recursion
which assumes $n \geq m$:
$$
T(n) = \left\{ 
\begin{array}{ll}
%
\BO{ \frac{n^{3/2}}{m^{1/2}} + \frac{n}{m} \ell} &  \textnormal{if } m \leq n \leq m n_0;\\
p_0 T(n/n_0) + \BO{n} & \textnormal{otherwise }.\\
\end{array}
\right.
$$
By solving the recurrence, we get 
$$T(n) = \BO{ \frac{n_0}{p_0}  \left( \frac{n}{m} \right)^{\omega_0}  \left( m \sqrt{n_0} + \ell \right)},$$ where
$\omega_0 = \log_{n_0}{p_0}$. Since $n_0$ and $p_0$ are independent of $n$, we get the claimed result.
\end{proof}

The standard recursive matrix multiplication algorithm gives  $\BO{n^{3/2}/m^{1/2} + (n/m)^{3/2}\ell)}$ time. 
With the Strassen algorithm, we get   $\BO{n^{1.4037}/m^{0.4037}+(n/m)^{1.4037}\ell}$ time 

We now show how to decrease the latency cost, i.e., $(n/m)^{3/2}\ell$, in the TCU algorithm based on the standard algorithm.
The idea is to keep as much as possible the right  matrix $B$ within the tensor unit by using a tall left matrix $A$.
We split the left matrix $A$ into $\sqrt{n/m}$ blocks $A_i$ of size $\sqrt{n}\times \sqrt{m}$ (i.e., vertical strips of width $\sqrt{m}$) and the right matrix $B$ into square blocks $B_{i,j}$ of size $\sqrt{m}\times \sqrt{m}$, with $0\leq i,j< \sqrt{n/m}$.
Then, we compute  $C_{i,j}= A_i \cdot B_{i,j}$ for each $0\leq i,j< \sqrt{n/m}$ using the tensor unit in time $\BO{\sqrt{nm}+\ell}$. 
The final matrix $C$ follows by computing the $\sqrt{n}\times \sqrt{m}$ matrices $C_{i} = \sum_{j=0}^{\sqrt{n/m}-1} C_{i,j}$.

\begin{theorem}\label{thm:mmult2}
There exists an algorithm that multiplies two $\sqrt{n}\times \sqrt{n}$ matrices in the $(m,\ell)$-TCU model in time
$$
T(n) = \BT{\frac{n^{3/2}}{m^{1/2}} + \frac{n}{m}\ell}.
$$
The algorithm is optimal when only semiring operations are allowed.
\end{theorem}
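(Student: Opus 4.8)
The plan is to split the argument into a constructive upper bound, which makes the $\BT{\cdot}$ exact for the stated algorithm, and a matching semiring lower bound establishing optimality.

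For the upper bound I would simply account for the cost of the blocking scheme already described before the statement. There are $n/m$ products $A_i\cdot B_{i,j}$ (the indices $i$ and $j$ each range over $\sqrt{n/m}$ values), and each multiplies a tall $\sqrt{n}\times\sqrt{m}$ matrix by a $\sqrt{m}\times\sqrt{m}$ matrix, which by the asymmetric primitive of the model costs $\BO{\sqrt{nm}+\ell}$. Summing over all $n/m$ calls gives $\BO{(n/m)(\sqrt{nm}+\ell)}=\BO{n^{3/2}/\sqrt{m}+(n/m)\ell}$. The combination step forms each output strip $C_i$ as a sum of $\sqrt{n/m}$ matrices of $\sqrt{nm}$ entries, i.e. $\BO{(n/m)\sqrt{nm}}=\BO{n^{3/2}/\sqrt{m}}$ semiring additions on the CPU, which is absorbed into the first term. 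The lower bound matching this particular algorithm is immediate: it issues $n/m$ calls, each incurring latency $\BOM{\ell}$ and reading $\BOM{\sqrt{nm}}$ words, so its running time is exactly $\BT{n^{3/2}/\sqrt{m}+(n/m)\ell}$.

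For the optimality claim I would prove a lower bound of $\BOM{n^{3/2}/\sqrt{m}+(n/m)\ell}$ for \emph{every} algorithm restricted to semiring operations. The starting point is the classical fact that over a semiring no cancellation is possible, so all $n^{3/2}$ elementary products $a_{i,k}b_{k,j}$ must be explicitly computed by some operation. I would bound the first term by a throughput argument: a single tensor call on an $n'\times\sqrt{m}$ by $\sqrt{m}\times\sqrt{m}$ pair produces $n'\sqrt{m}$ output entries, each a sum of $\sqrt{m}$ products, hence at most $n'm$ products in time $\BOM{n'\sqrt{m}}$; thus the unit computes $\BO{\sqrt{m}}$ products per unit of its running time, and the CPU at most one. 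Consequently at most $\BO{\sqrt{m}}$ products are produced per time step overall, and performing $n^{3/2}$ of them forces running time $\BOM{n^{3/2}/\sqrt{m}}$.

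The latency term is where the real work lies. I would argue that the right operand $B$ of size $\sqrt{n}\times\sqrt{n}$ has $n$ distinct entries, that the $\sqrt{m}\times\sqrt{m}$ operand loaded in any single tensor call contains at most $m$ entries of $B$, and that each entry $b_{k,j}$ participates in $\sqrt{n}$ required products. If an algorithm issues $t$ tensor calls, these jointly involve at most $tm$ distinct entries of $B$, so at least $n-tm$ entries have all of their $\sqrt{n}$ products computed on the CPU, costing $\BOM{(n-tm)\sqrt{n}}$ time. Hence either $t=\BOM{n/m}$, giving latency cost $\BOM{(n/m)\ell}$, or $t=\SO{n/m}$, in which case the CPU work is $\BOM{n^{3/2}}$; combining both cases with the throughput bound yields $\BOM{n^{3/2}/\sqrt{m}+(n/m)\ell}$. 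The main obstacle I anticipate is making this $B$-coverage argument fully rigorous, in particular verifying that the systolic contraction structure genuinely forbids crediting a single call with more than $m$ distinct entries of $B$, and handling cleanly the interaction between CPU-computed and unit-computed products: the second case is only meaningful, and the bound tight, in the regime $\ell=\BO{m\sqrt{n}}$ where using the tensor unit is actually advantageous over pure CPU multiplication.
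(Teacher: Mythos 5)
Your proof is correct and follows essentially the same route as the paper's: the same blocked upper bound ($n/m$ tall-by-square tensor calls costing $\BO{\sqrt{nm}+\ell}$ each, with the strip summations absorbed), the same product-counting throughput argument for the $\BOM{n^{3/2}/\sqrt{m}}$ term, and the same ``at most $m$ entries of $B$ per call'' coverage argument for the $\BOM{(n/m)\ell}$ term. Your latency lower bound is in fact slightly more careful than the paper's, which tacitly assumes every entry of $B$ must pass through the tensor unit; your case split on CPU-computed products, and the observation that the $\ell$-term is only a genuine lower bound in the regime $\ell=\BO{m\sqrt{n}}$, make explicit a restriction the paper leaves implicit.
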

\begin{proof}
Each multiplication $C_{i,j}= A_i \cdot B_{i,j}$ requires $\BO{n\sqrt{m} + \ell}$ time. Since  there are ${n/m}$ such multiplications, the upper bound follows. The cost of the final summation is negligible.

When using only semiring operations, any algorithm must compute $n^{3/2}$ elementary products. Since each call to a tensor computes $m^{3/2}$ elementary products in $\BT{m}$ time using a systolic algorithm, we need $\BOM{n^{3/2}/m^{1/2}}$ time.
Furthermore, since all entries of $B$ must be loaded in the tensor unit at least once and we cannot load more than $m$ entries in $B$ per tensor operation, the algorithm has to load at least $n/m$ distinct right matrices in the tensor unit; then a $\BOM{\ell n/m}$ lower bound on the time also follows.
\end{proof}

From the previous Theorem~\ref{thm:mmult2}, we get the following corollary for rectangular matrices (a similar result holds also when using the algorithm for fast matrix multiplication in Theorem~\ref{thm:mmult}).
\begin{corollary}\label{cor:mmult2}
A $\sqrt{n} \times r$ matrix can be multiplied by an $r \times \sqrt{n}$ matrix in the $(m,\ell)$-TCU model in time
$$
T(n, r) = \BT{\frac{r n}{m^{1/2}} + \frac{r \sqrt{n}}{m}\ell},
$$
assuming $n, r^2 \geq m$.
\end{corollary}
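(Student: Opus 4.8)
The plan is to mirror the tiling used in the proof of Theorem~\ref{thm:mmult2}, adapting the block dimensions to the rectangular shape, and then to match the running time with the same two counting lower bounds used there. The corollary is not obtained by invoking Theorem~\ref{thm:mmult2} as a black box on square subproblems, but by repeating its strip decomposition with the correct rectangular dimensions.

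For the upper bound I would partition the shared inner dimension $r$ into $r/\sqrt{m}$ strips of width $\sqrt{m}$ (here the hypothesis $r^2\ge m$ guarantees $r\ge\sqrt{m}$). This cuts the left matrix $A$ into vertical strips $A_i$ of size $\sqrt{n}\times\sqrt{m}$, and cuts the right matrix $B$ into horizontal strips of height $\sqrt{m}$, each of which is then sliced along its $\sqrt{n}$ columns (using $n\ge m$) into $\sqrt{n/m}$ square blocks $B_{i,j}$ of size $\sqrt{m}\times\sqrt{m}$. Every partial product $C_{i,j}=A_i\cdot B_{i,j}$ is a tall-by-square multiplication handled by one call to the tensor unit in time $\BO{\sqrt{nm}+\ell}$, exactly the asymmetric primitive exploited in Theorem~\ref{thm:mmult2}. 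There are $(r/\sqrt{m})\cdot\sqrt{n/m}=r\sqrt{n}/m$ such calls, so summing their costs gives $\BO{(r\sqrt{n}/m)(\sqrt{nm}+\ell)}=\BO{rn/m^{1/2}+(r\sqrt{n}/m)\ell}$. Assembling each output column-strip $C_{\cdot,j}=\sum_i C_{i,j}$ costs $\BO{nr/\sqrt{m}}$ additions in total, which is absorbed into the first term.

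For the matching lower bound over a semiring I would reuse the two arguments from Theorem~\ref{thm:mmult2}. The product has $n$ output entries, each an inner product of length $r$, so $\BT{rn}$ elementary products must be computed; since each tensor call produces elementary products at a rate of $\BO{\sqrt{m}}$ per unit of time (on an $n'\times\sqrt{m}$ left operand it computes $n'm$ products in time $\BOM{n'\sqrt{m}}$), at least $\BOM{rn/m^{1/2}}$ time is required. For the latency term, all $r\sqrt{n}$ entries of $B$ must be loaded into the unit, and at most $m$ of them can serve as the right operand of a single call, forcing $\BOM{r\sqrt{n}/m}$ distinct loads and hence $\BOM{(r\sqrt{n}/m)\ell}$ time; combining the two bounds yields the stated $\Theta$.

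The delicate points are bookkeeping rather than conceptual: checking that $n\ge m$ and $r^2\ge m$ make every block dimension at least $\sqrt{m}$, so the tiling is well defined and at least one right-matrix load is forced (indeed $r\sqrt{n}\ge m$), and absorbing the non-divisibility of $r$ and $\sqrt{n}$ by $\sqrt{m}$ through the usual padding, which changes none of the asymptotics. I expect the only real subtlety to be confirming that the correct per-call charge is the asymmetric cost $\BO{\sqrt{nm}+\ell}$, since it is precisely this tall-matrix accounting that yields the latency term $r\sqrt{n}\ell/m$ rather than the weaker $n\ell/m$ produced by naively tiling into $r\times r$ square blocks and calling Theorem~\ref{thm:mmult2} directly.
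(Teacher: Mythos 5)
Your proof is correct, but it takes a genuinely different route from the paper. The paper's proof is a one-line reduction: decompose the product into square subproblems of size $s \times s$ with $s=\min\{\sqrt{n},r\}$ and invoke Theorem~\ref{thm:mmult2} as a black box. When $r \geq \sqrt{n}$ this gives $r/\sqrt{n}$ square products of side $\sqrt{n}$ and the costs sum exactly to the stated bound. You instead redo the strip decomposition of Theorem~\ref{thm:mmult2} directly on the rectangular shapes, loading each $\sqrt{m}\times\sqrt{m}$ block of $B$ once and streaming the full $\sqrt{n}\times\sqrt{m}$ strip of $A$ through it, and you re-derive both counting lower bounds ($rn$ elementary products at rate $\sqrt{m}$ per time unit, and $r\sqrt{n}/m$ forced loads of the right operand). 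Your closing remark identifies a real issue with the black-box route: in the regime $r<\sqrt{n}$ it produces $n/r^2$ independent $r\times r$ products, and summing their Theorem~\ref{thm:mmult2} costs yields a latency term of $\frac{n}{m}\ell$ rather than the claimed $\frac{r\sqrt{n}}{m}\ell$, which is weaker whenever $\ell = \SOM{r\sqrt{m}}$ (the square tiling reloads each block of $B$ once per row-block of $A$). Your direct tiling avoids this and matches the load-count lower bound in every regime, so your argument is, if anything, the more careful one; the price is only the extra bookkeeping you already acknowledge.
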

\begin{proof}
It suffices to decompose the problem into products of size $s\times s$ where $s=\min\{\sqrt{n}, r\}$ and then apply Theorem \ref{thm:mmult2}.
\end{proof}

\subsubsection*{Sparse matrix multiplication}
A TCU algorithm to multiply two sparse matrices follows from the work~\cite{Riko15} that uses as a black box a fast matrix multiplication algorithm for multiplying two $\sqrt{n}\times \sqrt{n}$ matrices in $\BO{n^{\omega/2}}$ time.
Let $I$ be the number of non-zero entries in the input matrices $A$ and $B$, and let $Z$ be the number of non-zero entries in the output $C=A\cdot B$. 
We consider here the case  where the output is balanced, that is there are $\BT{Z/\sqrt{n}}$ non-zero entries per row or column in $C$; the more general case where non-zero entries are not balanced is also studied in~\cite{Riko15} and can be adapted to TCU with  a similar argument.
The algorithm in~\cite{Riko15} computes the output in time $\BTO{\sqrt{n}Z^{(\omega-1)/2}+I}$ with high probability.
The idea is to compress the rows of $A$ and the column of $B$ from $\sqrt{n}$ to $\sqrt{Z}$ using a hash function or another compression algorithm able to build a re-ordering of the matrix $A$.
Then the algorithm computes a dense matrix product between a $\sqrt{Z}\times \sqrt{n}$ matrix and a $\sqrt{n}\times \sqrt{Z}$ using the fast matrix multiplication algorithm.
By replacing the fast matrix multiplication with the result of Theorem~\ref{thm:mmult}, we get the following.

\begin{theorem}
Let $A$ and $B$ be two sparse input matrices of size $\sqrt{n}\times \sqrt{n}$ with at most $I$ non-zero entries, and assume that $C=A   \cdot B$ has at most $Z$ non-zero entries  evenly balanced among rows and columns.
Then there exists an algorithm for the $(m,\ell)$-TCU model requiring time:
$$
T(n,Z,I)=\BO{\sqrt{\frac{n}{Z}}\left(\frac{Z}{m}\right)^{\omega_0}(m+\ell)+I},
$$
when $Z\geq m$ and where $\omega_0 = \log_{n_0}p_0$ is the exponent given by a Strassen-like algorithm.
\end{theorem}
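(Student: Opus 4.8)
The plan is to plug the TCU dense-multiplication bound from Theorem~\ref{thm:mmult} into the sparse-multiplication framework of~\cite{Riko15}, tracking how the cost of the single dense product changes while leaving the compression machinery untouched. First I would recall the structure of the algorithm in~\cite{Riko15}: using a hash function (or an explicit re-ordering), the $\sqrt{n}$ rows of $A$ and the $\sqrt{n}$ columns of $B$ are compressed down to $\sqrt{Z}$, so that the bulk of the work reduces to one dense product of a $\sqrt{Z}\times\sqrt{n}$ matrix by a $\sqrt{n}\times\sqrt{Z}$ matrix, plus $\BO{I}$ work to read the input and handle the hashing. The correctness and the high-probability guarantee that this dense product recovers $C$ are inherited verbatim from~\cite{Riko15}, since we are only swapping out the subroutine that performs the dense multiplication; I would state this explicitly so the reader sees that no new probabilistic analysis is needed.

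The core step is to re-cost that dense rectangular product on the $(m,\ell)$-TCU model. Since the two factors have inner dimension $\sqrt{n}$ and outer dimension $\sqrt{Z}$, with $Z\geq m$ and (by the balanced assumption) $Z\leq n$, I would invoke Corollary~\ref{cor:mmult2} in its Strassen-like form, or equivalently decompose the rectangular product into $\sqrt{n/Z}$ square products of size $\sqrt{Z}\times\sqrt{Z}$ along the long inner dimension and apply Theorem~\ref{thm:mmult} to each. Each square product costs $\BO{(Z/m)^{\omega_0}(m+\ell)}$ by Theorem~\ref{thm:mmult} (valid because $Z\geq m\geq n_0$), and summing over the $\sqrt{n/Z}$ blocks gives the dense cost
$$
\BO{\sqrt{\frac{n}{Z}}\left(\frac{Z}{m}\right)^{\omega_0}(m+\ell)}.
$$
Adding the $\BO{I}$ term for reading the sparse inputs and performing the compression yields exactly the claimed bound.

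The main obstacle I anticipate is bookkeeping rather than any deep difficulty: I must make sure the rectangular-to-square reduction is applied along the correct dimension (splitting the shared inner dimension $\sqrt{n}$ into $\sqrt{n/Z}$ chunks of length $\sqrt{Z}$, so that accumulating the partial products is a cheap $\BO{Z}$-per-block summation absorbed into the leading term), and that the hypotheses of the dense-multiplication theorem are met throughout — in particular $m\geq n_0$ and $Z\geq m$, which together guarantee each sub-product is large enough to use the tensor unit. I would also note that the latency term rides along naturally: each of the $\sqrt{n/Z}$ blocks contributes its own $(n/Z)^{?}$-free factor of $\ell$ folded into the $(m+\ell)$ per call, so no separate latency accounting is required beyond what Theorem~\ref{thm:mmult} already provides.
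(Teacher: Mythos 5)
Your proposal is correct and takes essentially the same route as the paper: the paper's own proof is a one-sentence appeal to the cost of the dense $\sqrt{Z}\times\sqrt{n}$ by $\sqrt{n}\times\sqrt{Z}$ product via Theorem~\ref{thm:mmult}, and your decomposition of that product along the inner dimension into $\sqrt{n/Z}$ square $\sqrt{Z}\times\sqrt{Z}$ products, each costing $\BO{(Z/m)^{\omega_0}(m+\ell)}$, plus the $\BO{I}$ compression term, is exactly the intended derivation of the stated bound. You in fact supply more of the bookkeeping than the paper does.
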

\begin{proof}
The cost is dominated by the  matrix product between a $\sqrt{Z}\times \sqrt{n}$ matrix and a $\sqrt{n}\times \sqrt{Z}$ using the fast matrix multiplication algorithm in Theorem~\ref{thm:mmult},
\end{proof}

 \subsection{Gaussian Elimination without Pivoting}
 \label{ssec:geplu}

 \begin{figure}[t]
 
 \begin{center}
 \scalebox{0.95}{
 \framebox{
 \begin{minipage}{2.8in}
 {\footnotesize
 \noindent
 \begin{itemize}
 
 \item[{\bf{1.}}] \xfor $k \leftarrow 1$ \xto $\sqrt{n} - 2$ \xdo
 
 \item[{\bf{2.}}] \T \xfor $i \leftarrow k + 1$ \xto $\sqrt{n} - 1$ \xdo
 
 \item[{\bf{3.}}] \T\T \xfor $j \leftarrow k + 1$ \xto $\sqrt{n}$ \xdo
 
 \item[{\bf{4.}}] \T\T\T $c[i, j] \leftarrow c[i, j] + \left( - {{c[i, k]} \over {c[k, k]}} \right) \times c[k, j]$
 
 \end{itemize}
 }
 \end{minipage}
 }
 }
 \end{center}
 
 \caption{Forward phase of Gaussian elimination.}
 \label{fig:gep}
 \vspace{-0.2cm}
 \end{figure}

\begin{figure}[t]
\centering
\includegraphics[scale=0.45]{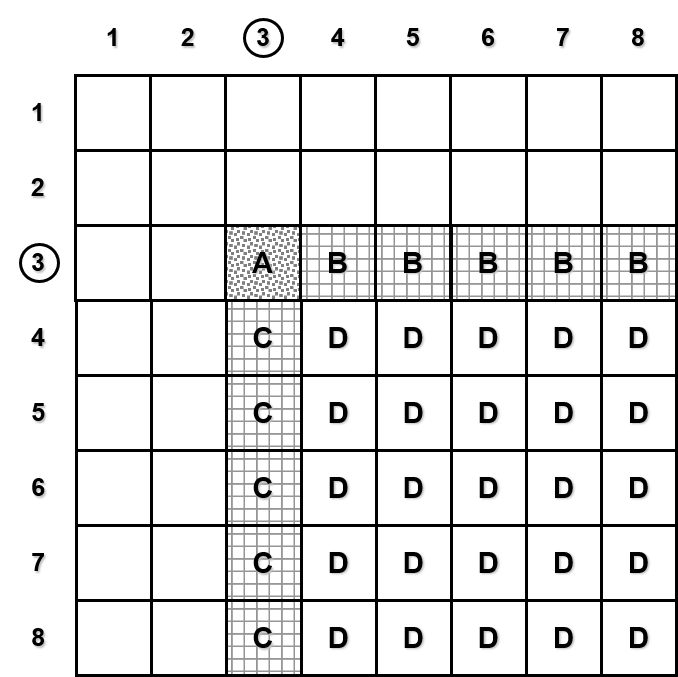}
    \caption{Blocked version of Gaussian elimination.
 It shows which functions update which blocks in iteration $k = 3$
of the outermost \xfor loop of \func{GE-forward}.} 
    \label{fig:gep-tpu}
\end{figure}

 \begin{figure*}[t!] 

 \begin{minipage}{6.6in}
 \begin{center}
 \begin{minipage}{3in}
 \framebox{
 \begin{minipage}{2.8in}
 {\scriptsize
 \vspace{-0.2cm}
 \medskip\noindent\func{GE-forward$(~X~)$}

 \vspace{0.1cm}
 \noindent
 ($X$ points to the $\sqrt{n} \times \sqrt{n}$ 
  input matrix $c$. We assume that $m$ divides $n$, where 
  $\sqrt{m} \times \sqrt{m}$ is the size of the matrix multiplication
  unit of the TCU.)

 \vspace{0.1cm}
 \noindent
 \begin{enumerate}

 \nvs
 \item Split $X$ into $\sqrt{n \over m} \times \sqrt{n \over m}$ square submatrices 
       of size $\sqrt{m} \times \sqrt{m}$ each. The submatrix of $X$
       at the $i$-th position from the top and the $j$-th 
       position from the left is denoted by $X_{ij}$.
       $X'$ is a $\sqrt{m} \times \sqrt{n}$ matrix
       split into $\sqrt{m} \times \sqrt{m}$ submatrices,
       where the submatrix at $j$-th position from the
       left is denoted by $X'_j$. 

 \nvs
 \item \xfor $k \leftarrow 1$ \xto $\sqrt{n/m}$ \xdo \label{ln:k}

 \nvs
 \item \T \func{A$(~X_{kk}~)$} \label{ln:A}

 \nvs
 \item \T \xfor $j \leftarrow k + 1$ \xto $\sqrt{n/m}$ \xdo \label{ln:B}

 \nvs
 \item \T\T   \func{B$(~X_{kj},~ X_{kk},~ X'_j~)$}

 \nvs
 \item \T \xfor $i \leftarrow k + 1$ \xto $\sqrt{n/m}$ \xdo \label{ln:C}

 \nvs
 \item \T\T   \func{C$(~X_{ik},~ X_{kk}~)$}

 \nvs
 \item \T \xfor $j \leftarrow k + 1$ \xto $\sqrt{n/m}$ \xdo \label{ln:D-j}

 \nvs
 \item \T\T \xfor $i \leftarrow k + 1$ \xto $\sqrt{n/m}$ \xdo \label{ln:D-i}

 \nvs
 \item \T\T\T \func{D$(~X_{ij},~ X_{ik},~ X'_j~)$} \label{ln:D-call}

 \end{enumerate}
 }
 \end{minipage}
 }

 \framebox{
 \begin{minipage}{2.8in}
 {\scriptsize
 \vspace{-0.2cm}
 \medskip\noindent\func{D$(~X,~Y,~Z~)$}

 \vspace{0.1cm}
 \noindent
 ($X$, $Y$ and $Z$ point to disjoint $\sqrt{m} \times \sqrt{m}$ 
  matrices, where $\sqrt{m} \times \sqrt{m}$ is 
  the size of the matrix multiplication unit of the TCU.)

 \vspace{0.1cm}
 \noindent
 \begin{enumerate}

 \nvs
 \item \xfor $k \leftarrow 1$ \xto $\sqrt{m}$ \xdo
 
 \nvs
 \item \T \xfor $i \leftarrow 1$ \xto $\sqrt{m}$ \xdo
 
 \nvs
 \item \T\T \xfor $j \leftarrow 1$ \xto $\sqrt{m}$ \xdo
 
 \nvs
 \item \T\T\T $X[i, j] \leftarrow X[i, j] + Y[i, k] \times Z[k, j]$

 \end{enumerate}
 }
 \end{minipage}
 }

 \end{minipage}
 \begin{minipage}{3.2in}
 \framebox{
 \begin{minipage}{3.1in}
 {\scriptsize
 \vspace{-0.1cm}
 \medskip\noindent\func{A$(~X~)$}

 \vspace{0.1cm}
 \noindent
 ($X$ points to a $\sqrt{m} \times \sqrt{m}$ 
  matrix, where $\sqrt{m} \times \sqrt{m}$ is 
  the size of the matrix multiplication unit 
  of the TCU.)

 \noindent
 \begin{enumerate}

 \nvs
 \item \xfor $k \leftarrow 1$ \xto $\sqrt{m} - 1$ \xdo
 
 \nvs
 \item \T \xfor $i \leftarrow k + 1$ \xto $\sqrt{m}$ \xdo
 
 \nvs
 \item \T\T \xfor $j \leftarrow k + 1$ \xto $\sqrt{m}$ \xdo
 
 \nvs
 \item \T\T\T $X[i, j] \leftarrow X[i, j] - { \left( {X[i, k] \times X[k, j] } \right) / {X[k, k]} }$

 \end{enumerate}
 }
 \end{minipage}
 }

 \framebox{
 \begin{minipage}{3.1in}
 {\scriptsize
 \vspace{-0.1cm}
 \medskip\noindent\func{B$(~X,~Y,~X'~)$}

 \vspace{0.1cm}
 \noindent
 ($X$, $Y$ and $X'$ point to disjoint $\sqrt{m} \times \sqrt{m}$ 
  matrices, where $\sqrt{m} \times \sqrt{m}$ is 
  the size of the matrix multiplication unit of the TCU.)

 \noindent
 \begin{enumerate}

 \nvs
 \item \xfor $k \leftarrow 1$ \xto $\sqrt{m} - 1$ \xdo
 
 \nvs
 \item \T \xfor $i \leftarrow k + 1$ \xto $\sqrt{m}$ \xdo
 
 \nvs
 \item \T\T \xfor $j \leftarrow 1$ \xto $\sqrt{m}$ \xdo
 
 \nvs
 \item \T\T\T $X[i, j] \leftarrow X[i, j] - { \left( {Y[i, k] \times X[k, j] } \right) / {Y[k, k]} }$

 \nvs
 \item \xfor $i \leftarrow 1$ \xto $\sqrt{m}$ \xdo
 
 \nvs
 \item \T \xfor $j \leftarrow 1$ \xto $\sqrt{m}$ \xdo
 
 \nvs
 \item \T\T $X'[i, j] \leftarrow - X[i, j] / Y[i, i]$

 \end{enumerate}
 }
 \end{minipage}
 }

 \framebox{
 \begin{minipage}{3.1in}
 {\scriptsize
 \vspace{-0.1cm}
 \medskip\noindent\func{C$(~X,~Y~)$}

 \vspace{0.1cm}
 \noindent
 ($X$ and $Y$ point to disjoint $\sqrt{m} \times \sqrt{m}$ 
  matrices, where $\sqrt{m} \times \sqrt{m}$ is 
  the size of the matrix multiplication unit of the TCU.)

 \noindent
 \begin{enumerate}

 \nvs
 \item \xfor $k \leftarrow 1$ \xto $\sqrt{m}$ \xdo
 
 \nvs
 \item \T \xfor $i \leftarrow 1$ \xto $\sqrt{m}$ \xdo
 
 \nvs
 \item \T\T \xfor $j \leftarrow k + 1$ \xto $\sqrt{m}$ \xdo
 
 \nvs
 \item \T\T\T $X[i, j] \leftarrow X[i, j] - { \left( {X[i, k] \times Y[k, j] } \right) / {Y[k, k]} }$

 \end{enumerate}
 }
 \end{minipage}
 }

 \end{minipage}

 \vspace{-0.2cm}
 \caption{TCU algorithm for Gaussian elimination without pivoting which is called as 
 \func{GE-forward}$(~c~)$, where $c$ is the $\sqrt{n} \times \sqrt{n}$ matrix 
 representing a system of $\sqrt{n} - 1$ equations with $\sqrt{n} - 1$ unknowns.}
\label{fi:gep-tcu}

 \end{center}
 \end{minipage}
 \end{figure*}

 Gaussian elimination
 without pivoting is used in the solution of systems of linear
 equations and LU decomposition of symmetric positive-definite or
 diagonally dominant real matrices \cite{CLRS01}. We represent a
 system of $r - 1$ equations in $r - 1$ unknowns ($x_{1}, x_{2},
 \ldots, x_{r - 1}$) using an $r \times r$ matrix $c$,
 where the $i$'th ($1 \leq i < r$) row
 represents the equation $a_{i, 1}x_{1} + a_{i, 2}x_{2} + \ldots
 + a_{i, r - 1}x_{r - 1} = b_{i}$:
\begin{displaymath}
 c = \left( \begin{array}{ccccc}
              a_{1, 1} & a_{1, 2} & \ldots & a_{1, \sqrt{n} - 1} & b_{1}\\
              a_{2, 1} & a_{2, 2} & \ldots & a_{2, \sqrt{n} - 1} & b_{2}\\
              \vdots & \vdots & \ddots & \vdots & \vdots\\
              a_{\sqrt{n} - 1, 1} & a_{\sqrt{n} - 1, 2} & \ldots & a_{\sqrt{n} - 1, \sqrt{n} - 1} & b_{\sqrt{n} - 1}\\
              0 & 0 & \ldots & 0 & 0
             \end{array} \right)
 \end{displaymath} 
 
  The method proceeds in two phases.
 In the first phase, an upper triangular matrix is constructed from $c$
 by successive elimination of variables from the 
 equations. This phase requires $\BT{r^3}$ time (see code in Figure~\ref{fig:gep}). 
 In the second phase, the values of
 the unknowns are determined from this 
 matrix by back substitution.
 It is straightforward to implement this
 second phase in $\BT{r^2}$ time, so we will concentrate on the first phase.

 Our TCU algorithm for the forward phase of Gaussian elimination
 without pivoting is shown in Figure \ref{fi:gep-tcu}.
 The algorithm is invoked as \func{GE-forward}$(~c~)$, where $c$ is the 
 $\sqrt{n} \times \sqrt{n}$ matrix representing a system of 
 $\sqrt{n} - 1$ equations with $\sqrt{n} - 1$ unknowns.
 Figure \ref{fig:gep-tpu} shows which function $\in \left\{ \func{A}, \func{B}, \func{C}, \func{D} \right\}$ 
 update which block in iteration $k = 3$ of the outermost 
 \xfor loop of \func{GE-forward}.
 In this algorithm only the calls to function \func{D}
 (in line \ref{ln:D-call}) which multiplies $\sqrt{m} \times \sqrt{m}$ matrices
 are executed on the TCU. In each iteration of the loop
 in line \ref{ln:D-j}, $X'_j$ is loaded into the TCU
 as the weight matrix, and the $\left( \sqrt{n/m} - k \right) \sqrt{m}
 = \sqrt{n} - k\sqrt{m}$ rows of the $\sqrt{n/m} - k$ submatrices $X_{ik}$
 inside the loop in line \ref{ln:D-i} are streamed through the TCU.

\begin{theorem}\label{thm:geplu}
The forward phase of Gaussian elimination without pivoting applied
on a system of $\sqrt{n} - 1$ equations with $\sqrt{n} - 1$ unknowns
can be performed in the $(m,\ell)$-TCU model in time
$$
T(n) = \BT{\frac{n^{3/2}}{m^{1/2}} + \frac{n}{m}\ell + n\sqrt{m}}.
$$
This complexity reduces to the optimal cost of multiplying two dense
$\sqrt{n} \times \sqrt{n}$ matrices (see Theorem \ref{thm:mmult2}) 
when $\sqrt{n} \geq m$.
\end{theorem}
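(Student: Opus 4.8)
The plan is to separate the total running time into the cost of the operations executed on the CPU—the calls to \func{A}, \func{B}, and \func{C}—and the cost of the operations executed on the tensor unit, namely the calls to \func{D}, and then to add the two contributions. Throughout, let $N = \sqrt{n/m}$ denote the number of $\sqrt{m}\times\sqrt{m}$ blocks along each side of the matrix, so that the outer loop of \func{GE-forward} runs for $k = 1, \dots, N$.

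First I would bound the CPU cost. Each of \func{A}, \func{B}, and \func{C} consists of a constant number of triply nested loops of length $\sqrt{m}$, and hence costs $\BT{m^{3/2}}$ per call. Function \func{A} is called once per value of $k$, for a total of $N$ calls contributing $\BT{N m^{3/2}} = \BT{\sqrt{n}\,m}$; functions \func{B} and \func{C} are each called once for every pair $(k, j)$ (resp.\ $(k,i)$) with the second index exceeding $k$, i.e.\ $\BT{N^2}$ times, contributing $\BT{N^2 m^{3/2}} = \BT{n\sqrt{m}}$ each. Since $m \le n$ in the regime of interest, the $\sqrt{n}\,m$ term from \func{A} is dominated, so the CPU cost is $\BT{n\sqrt{m}}$, which accounts for the third term of the claimed bound.

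The crux is the cost of the \func{D} calls, and here the asymmetric (tall-matrix) behaviour of the model must be used. For each fixed pair $(k,j)$, the inner loop of line~\ref{ln:D-i} is realized as a \emph{single} tensor-unit invocation: the weight block $X'_j$ is loaded once as the $\sqrt{m}\times\sqrt{m}$ right operand, while the $N-k$ blocks $X_{ik}$ with $i = k+1,\dots,N$ are stacked into one tall $(N-k)\sqrt{m}\times\sqrt{m}$ left operand and streamed through the unit. By the definition of the $(m,\ell)$-TCU this costs $\BO{(N-k)\sqrt{m}\cdot\sqrt{m} + \ell} = \BO{(N-k)m + \ell}$, so the latency $\ell$ is paid once per $(k,j)$ rather than once per individual block. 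Summing over the $N-k$ values of $j$ for each $k$, and then over $k$, gives
$$
\sum_{k=1}^{N}(N-k)\bigl((N-k)m + \ell\bigr) = \BT{m\sum_{k=1}^{N}(N-k)^2 + \ell\sum_{k=1}^{N}(N-k)} = \BT{mN^3 + \ell N^2}.
$$
Substituting $N = \sqrt{n/m}$ yields $\BT{n^{3/2}/m^{1/2} + (n/m)\ell}$, the first two terms of the claim.

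Adding the two contributions gives the stated running time. For the final remark I would simply compare $n\sqrt{m}$ with $n^{3/2}/m^{1/2}$: their ratio is $m/\sqrt{n}$, which is at most $1$ exactly when $\sqrt{n}\ge m$; in that regime the $n\sqrt{m}$ term is absorbed and the bound collapses to $\BT{n^{3/2}/m^{1/2} + (n/m)\ell}$, matching the matrix-multiplication cost of Theorem~\ref{thm:mmult2}. The main obstacle is the granularity of a tensor invocation in the third paragraph: one must recognize that the entire $i$-loop collapses into one streamed multiplication so that the latency is amortized. Treating each $\sqrt{m}\times\sqrt{m}$ \func{D}-block as a separate TCU call would inflate the latency term to $\BT{(n/m)^{3/2}\ell}$, so exploiting the tall left operand is precisely what makes the latency term match that of plain matrix multiplication.
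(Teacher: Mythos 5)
Your proposal is correct and follows essentially the same route as the paper's proof: both charge $\BT{m^{3/2}}$ per call to \func{A}, \func{B}, \func{C}, and both amortize the latency by loading $X'_j$ once per $(k,j)$ pair and streaming the $(\sqrt{n/m}-k)\sqrt{m}$ rows of the blocks $X_{ik}$ as a single tall left operand, yielding the sum $\sum_k (\sqrt{n/m}-k)\bigl((\sqrt{n/m}-k)m+\ell\bigr)$. Your separation into CPU and tensor-unit contributions is only a cosmetic reorganization of the paper's single combined summation.
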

\begin{proof}
The outermost loop of \func{GE-forward} in line \ref{ln:k}
is executed $\sqrt{n/m}$ times. In each iteration the cost
of executing line \ref{ln:A} is $\BT{m^{3/2}}$, and
that of executing the loops in lines \ref{ln:B} and \ref{ln:C}
is $\BT{m^{3/2}\left( \sqrt{n/m} - k \right) }$. Then in
each of the $\sqrt{n/m} - k$ iterations of the
loop in line \ref{ln:D-j} we load $X'_j$ as the
weight matrix into the TCU and then lines \ref{ln:D-i}--\ref{ln:D-call} 
are executed by streaming the $\left( \sqrt{n/m} - k \right)\sqrt{m} = \sqrt{n} - k\sqrt{m}$ rows
of $X_{k+1, k}, X_{k+2, k} \ldots X_{\sqrt{n/m}, k}$
through the TCU. Total cost of lines \ref{ln:D-j}--\ref{ln:D-call}
is thus $\BT{\left( \sqrt{n/m} - k \right) \left( \left( \sqrt{n} - k\sqrt{m} \right)\sqrt{m} + l\right)}$.
The total cost over all iterations of the loop
in line \ref{ln:k} is then 
$$\BT{\sum_{k=1}^{\sqrt{n/m}}{\left( m^{3/2} + \left( \sqrt{n/m} - k \right) 
\left( m^{3/2} + \left( \sqrt{n} - k\sqrt{m} \right)\sqrt{m} + l\right) \right)} }$$
from which we get $\BT{\frac{n^{3/2}}{m^{1/2}} + \frac{n}{m}\ell + n\sqrt{m}}.$
When $\sqrt{n} \geq m \Rightarrow \frac{n^{3/2}}{m^{1/2}} \geq n\sqrt{m}$,
the cost reduces to $\BT{\frac{n^{3/2}}{m^{1/2}} + \frac{n}{m}\ell}$
which matches the optimal cost of multiplying two dense
$\sqrt{n} \times \sqrt{n}$ matrices (see Theorem \ref{thm:mmult2}).
\end{proof}


 \subsection{Graph Transitive Closure}
 \label{ssec:ap-conn}
  
 \begin{figure}[ht]
 \begin{minipage}{3in}
 \begin{center}
 \scalebox{1}{
 \framebox{
 \begin{minipage}{2.9in}
 {\footnotesize
 \noindent
 \begin{itemize}
 
 \item[{\bf{1.}}] \xfor $k \leftarrow 1$ \xto $n$ \xdo
 
 \item[{\bf{2.}}] \T \xfor $i \leftarrow 1$ \xto $n$ \xdo
 
 \item[{\bf{3.}}] \T\T \xfor $j \leftarrow 1$ \xto $n$ \xdo
 
 \item[{\bf{4.}}] \T\T\T $d[i, j] \leftarrow d[i, j] \vee \left( d[i, k] \wedge d[k, j] \right)$
 
 \end{itemize}
 }
 \end{minipage}
 }
 }
 \end{center}
 \end{minipage}
 \vspace{-0.2cm}
 \caption{Iterative algorithm for computing the transitive closure of an $n$ vertex graph 
 given its $n \times n$ djacency matrix $d$ with $d[i, j] = 1$ if there is an edge from vertex $i$ 
 to vertex $j$ and $d[i, j] = 0$ otherwise.}
 \label{fig:tc}
 \vspace{-0.2cm}
 \end{figure}

\begin{figure}[ht]
\centering
\includegraphics[scale=0.45]{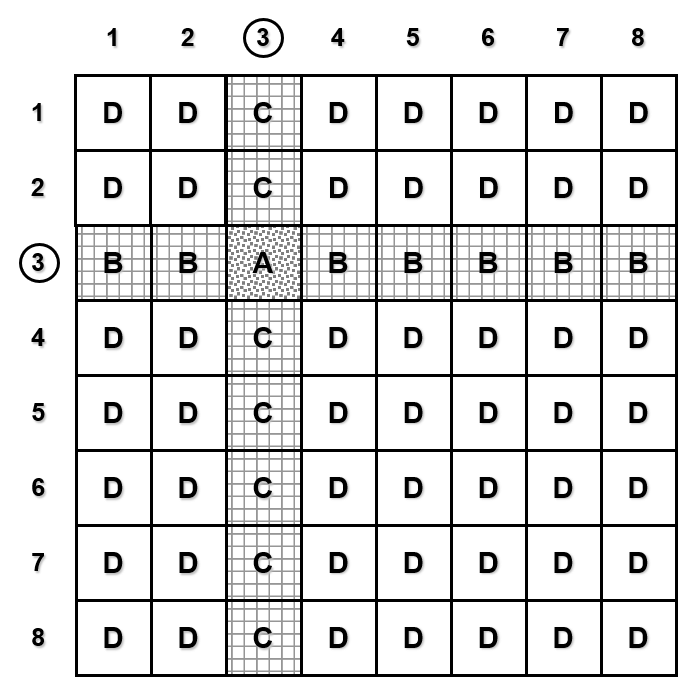}
    \caption{Blocked version of transitive closure. It shows
which functions update which blocks in iteration $k = 3$
of the outermost \xfor loop of \func{Transitive-Closure}.} 
    \label{fig:tc-tpu}
\end{figure}

 \begin{figure*}[ht!] 

 \begin{minipage}{6.6in}
 \begin{center}
 \begin{minipage}{3in}
 \framebox{
 \begin{minipage}{2.8in}
 {\scriptsize
 \vspace{-0.2cm}
 \medskip\noindent\func{Transitive-Closure$(~X~)$}

 \vspace{0.1cm}
 \noindent
 ($X$ points to the $n \times n$ 
  input $0/1$ matrix $d$. We assume that $m$ divides $n$, where 
  $\sqrt{m} \times \sqrt{m}$ is the size of the matrix multiplication
  unit of the TCU.)

 \vspace{0.1cm}
 \noindent
 \begin{enumerate}

 \nvs
 \item Split $X$ into ${n \over \sqrt{m}} \times {n \over \sqrt{m}}$ square submatrices 
       of size $\sqrt{m} \times \sqrt{m}$ each. The submatrix of $X$
       at the $i$-th position from the top and the $j$-th 
       position from the left is denoted by $X_{ij}$.

 \nvs
 \item \xfor $k \leftarrow 1$ \xto ${n \over \sqrt{m}}$ \xdo \label{ln:tc-k}

 \nvs
 \item \T \func{A$(~X_{kk}~)$} \label{ln:tc-A}

 \nvs
 \item \T \xfor $j \leftarrow 1$ \xto ${n \over \sqrt{m}}$ \xdo \label{ln:tc-B}

 \nvs
 \item \T\T \xif $j \neq k$ \xthen \func{B$(~X_{kj},~ X_{kk}~)$}

 \nvs
 \item \T \xfor $i \leftarrow 1$ \xto ${n \over \sqrt{m}}$ \xdo \label{ln:tc-C}

 \nvs
 \item \T\T \xif $i \neq k$ \xthen \func{C$(~X_{ik},~ X_{kk}~)$}

 \nvs
 \item \T \xfor $j \leftarrow 1$ \xto ${n \over \sqrt{m}}$ \xdo \label{ln:tc-D-j}

 \nvs
 \item \T\T \xfor $i \leftarrow 1$ \xto ${n \over \sqrt{m}}$ \xdo \label{ln:tc-D-i}

 \nvs
 \item \T\T\T \xif $i \neq k$ \xand $j \neq k$ \xthen  

 \nvs
 \item \T\T\T\T \func{D$(~X_{ij},~ X_{ik},~ X_{kj}~)$} \label{ln:tc-D-call}

 \end{enumerate}
 }
 \end{minipage}
 }

 \framebox{
 \begin{minipage}{2.8in}
 {\scriptsize
 \vspace{-0.1cm}
 \medskip\noindent\func{A$(~X~)$}

 \vspace{0.1cm}
 \noindent
 ($X$ points to a $\sqrt{m} \times \sqrt{m}$ 
  $0/1$ matrix, where $\sqrt{m} \times \sqrt{m}$ is 
  the size of the TCU matrix multiplication unit.)

 \noindent
 \begin{enumerate}

 \nvs
 \item \xfor $k \leftarrow 1$ \xto $\sqrt{m}$ \xdo
 
 \nvs
 \item \T \xfor $i \leftarrow 1$ \xto $\sqrt{m}$ \xdo
 
 \nvs
 \item \T\T \xfor $j \leftarrow 1$ \xto $\sqrt{m}$ \xdo
 
 \nvs
 \item \T\T\T $X[i, j] \leftarrow X[i, j] \vee { \left( {X[i, k] \wedge X[k, j] } \right) }$

 \end{enumerate}
 }
 \end{minipage}
 }

 \end{minipage}
 \begin{minipage}{3.2in}
 \framebox{
 \begin{minipage}{3.1in}
 {\scriptsize
 \vspace{-0.1cm}
 \medskip\noindent\func{B$(~X,~Y~)$}

 \vspace{0.1cm}
 \noindent
 ($X$, $Y$ and $X'$ point to disjoint $\sqrt{m} \times \sqrt{m}$ 
  $0/1$ matrices, where $\sqrt{m} \times \sqrt{m}$ is 
  the size of the TCU matrix multiplication unit.)

 \noindent
 \begin{enumerate}

 \nvs
 \item \xfor $k \leftarrow 1$ \xto $\sqrt{m}$ \xdo
 
 \nvs
 \item \T \xfor $i \leftarrow 1$ \xto $\sqrt{m}$ \xdo
 
 \nvs
 \item \T\T \xfor $j \leftarrow 1$ \xto $\sqrt{m}$ \xdo
 
 \nvs
 \item \T\T\T $X[i, j] \leftarrow X[i, j] \vee { \left( Y[i, k] \wedge X[k, j] \right) }$

 \end{enumerate}
 }
 \end{minipage}
 }

 \framebox{
 \begin{minipage}{3.1in}
 {\scriptsize
 \vspace{-0.1cm}
 \medskip\noindent\func{C$(~X,~Y~)$}

 \vspace{0.1cm}
 \noindent
 ($X$ and $Y$ point to disjoint $\sqrt{m} \times \sqrt{m}$ 
  $0/1$ matrices, where $\sqrt{m} \times \sqrt{m}$ is 
  the size of the TCU matrix multiplication unit.)

 \noindent
 \begin{enumerate}

 \nvs
 \item \xfor $k \leftarrow 1$ \xto $\sqrt{m}$ \xdo
 
 \nvs
 \item \T \xfor $i \leftarrow 1$ \xto $\sqrt{m}$ \xdo
 
 \nvs
 \item \T\T \xfor $j \leftarrow 1$ \xto $\sqrt{m}$ \xdo
 
 \nvs
 \item \T\T\T $X[i, j] \leftarrow X[i, j] \vee { \left( X[i, k] \wedge Y[k, j] \right) }$

 \end{enumerate}
 }
 \end{minipage}
 }

 \framebox{
 \begin{minipage}{3.1in}
 {\scriptsize
 \vspace{-0.2cm}
 \medskip\noindent\func{D$(~X,~Y,~Z~)$}

 \vspace{0.1cm}
 \noindent
 ($X$, $Y$ and $Z$ point to disjoint $\sqrt{m} \times \sqrt{m}$ 
  $0/1$ matrices, where $\sqrt{m} \times \sqrt{m}$ is 
  the size of the TCU matrix multiplication unit.)

 \noindent
 \begin{enumerate}

 \nvs
 \item \xfor $k \leftarrow 1$ \xto $\sqrt{m}$ \xdo \label{ln:tc-func-D-start-MM}
 
 \nvs
 \item \T \xfor $i \leftarrow 1$ \xto $\sqrt{m}$ \xdo
 
 \nvs
 \item \T\T \xfor $j \leftarrow 1$ \xto $\sqrt{m}$ \xdo
 
 \nvs
 \item \T\T\T $X[i, j] \leftarrow X[i, j] + \left( Y[i, k] \times Z[k, j] \right)$ \label{ln:tc-func-D-end-MM}

 \nvs
 \item \xfor $i \leftarrow 1$ \xto $\sqrt{m}$ \xdo
 
 \nvs
 \item \T \xfor $j \leftarrow 1$ \xto $\sqrt{m}$ \xdo
 
 \nvs
 \item \T\T \xif $X[i, j] > 1$ \xthen $X[i, j] \leftarrow 1$

 \end{enumerate}
 }
 \end{minipage}
 }

 \end{minipage}

 \vspace{-0.2cm}
 \caption{TCU algorithm for computing transitive closure of an $n$-vertex graph which is called as 
 \func{Transitive-Closure}$(~d~)$, where $d$ is the $n \times n$ 
 adjacency matrix of the graph with $d[i, j] = 1$ if vertices $i$ and $j$ are
 adjacent and $d[i, j] = 0$ otherwise.}
\label{fi:tc-tcu}

 \end{center}
 \end{minipage}

 \end{figure*}

 For an $n$-vertex directed graph $G$, its {\em transitive closure}
 is given by an $n \times n$ matrix $c[1..n, 1..n]$, where for
 all $i, j \in [1, n]$, $c[i, j] = 1$ provided vertex $j$ is reachable
 from vertex $i$ and $c[i, j] = 0$ otherwise. Figure \ref{fig:tc}
 shows how to compute $c$ in $\BT{n^3}$ time by updating the $0/1$ adjacency
 matrix $d[1..n, 1..n]$ of $G$ in place. The algorithm is similar
 to the standard iterative matrix multiplication algorithm except
 that bitwise-AND ($\wedge$) and bitwise-OR ($\vee$)
 replace multiplication ($\times$) and addition ($+$),
 respectively.

 Figure \ref{fi:tc-tcu} shows the blocked version of
 the algorithm given in Figure \ref{fig:tc}. 
 Figure \ref{fig:tc-tpu} shows which function $\in \left\{ \func{A}, \func{B}, \func{C}, \func{D} \right\}$ 
 updates which block in iteration $k = 3$ of the outermost 
 \xfor loop of \func{Transitive-Closure}.
 However, we observe that function $\func{D}$ which updates
 block $X$ using data from blocks $Y$ and
 $Z$ that are disjoint from $X$ can be
 implemented to use ``$\times$'' and ``$+$''
 instead of ``$\wedge$'' and ``$\vee$'',
 respectively, provided we set $X[i, j] \leftarrow \min{\left( X[i, j], 1 \right)}$
 for all $i, j$ after it completes updating $X$.
 Function $\func{D}$ is invoked
 in line \ref{ln:tc-D-call} of \func{Transitive-Closure}
 almost $\frac{n^2}{m}$ times. We execute lines
 \ref{ln:tc-func-D-start-MM}-- \ref{ln:tc-func-D-end-MM}
 of function $\func{D}$ (which represent standard
 multiplication of two $\sqrt{m} \times \sqrt{m}$ matrices) 
 on a TCU. In each iteration of the loop
 in line \ref{ln:tc-D-j}, $X_{kj}$ is loaded into the TCU
 as the weight matrix, and the $\left( n/\sqrt{m} - 1 \right) \sqrt{m}
 = \sqrt{n} - \sqrt{m}$ rows of the $n/\sqrt{m} - 1$ submatrices $X_{ik}$
 inside the loop in line \ref{ln:tc-D-i} are streamed through the TCU.

\begin{theorem}\label{thm:tc}
The transitive closure of an $n$-vertex directed graph
can be computed in the $(m,\ell)$-TCU model in time
$$
T(n) = \BT{\frac{n^{3}}{\sqrt{m}} + \frac{n^2}{m}\ell + n^2\sqrt{m}}.
$$
This complexity reduces to the optimal cost of multiplying two dense
$n \times n$ matrices (see Theorem \ref{thm:mmult2}) 
when $n \geq m$.
\end{theorem}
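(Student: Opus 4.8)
The plan is to mirror the accounting used in the proof of Theorem~\ref{thm:geplu}, since \func{Transitive-Closure} has exactly the same blocked three-nested-loop skeleton as \func{GE-forward}. Write $N = n/\sqrt{m}$ for the number of $\sqrt{m}\times\sqrt{m}$ blocks along each dimension. First I would separate the work charged to the four block kernels into CPU work and TCU work: \func{A}, \func{B} and \func{C} run on the CPU and each touches only $\sqrt{m}\times\sqrt{m}$ blocks, so each call costs $\BT{m^{3/2}}$; only \func{D} is offloaded to the tensor unit.

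The crux observation to justify before any counting is that the boolean update in \func{D} (which naively uses $\wedge,\vee$) can be realized by an integer matrix multiplication $X_{ij}\leftarrow X_{ij}+X_{ik}\cdot X_{kj}$ followed by a thresholding $X_{ij}\leftarrow\min(X_{ij},1)$, and this is correct because the integer inner product is positive exactly when the corresponding OR-of-ANDs equals $1$. This is precisely what makes \func{D} eligible for the TCU. I would then check that the streaming is hazard-free: in outer iteration $k$, \func{D} writes only interior blocks $X_{ij}$ with $i,j\neq k$, while it reads the row-$k$ block $X_{kj}$ and column-$k$ block $X_{ik}$, which are disjoint from the written blocks; hence for a fixed weight block $X_{kj}$ we may load it once and stream the $N-1$ stacked blocks $X_{ik}$ (a tall $(N-1)\sqrt{m}\times\sqrt{m}=(n-\sqrt{m})\times\sqrt{m}$ matrix) through the unit in a single TCU call of cost $\BO{n\sqrt{m}+\ell}$, then add the result back and threshold in $\BO{n\sqrt{m}}$ additional CPU time.

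Summing per outer iteration $k$: \func{A} contributes $\BT{m^{3/2}}$; the \func{B} and \func{C} loops each contribute $\BT{N\,m^{3/2}}$; and the \func{D} double loop, with $\approx N$ weight loads costing $\BO{n\sqrt{m}+\ell}$ apiece, contributes $\BT{N(n\sqrt{m}+\ell)}$. Multiplying by the $N=n/\sqrt{m}$ outer iterations and substituting $N$ turns the contributions into $n^2\sqrt{m}$ (from \func{B} and \func{C}), $n^3/\sqrt{m}$ (from the TCU multiplications in \func{D} together with the thresholding additions, which are of the same order), and $\left(n^2/m\right)\ell$ (latency from the $\approx n^2/m$ weight loads), which is exactly the claimed bound. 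The stated reduction then follows by noting that $n\ge m$ forces $n^3/\sqrt{m}\ge n^2\sqrt{m}$, so the $n^2\sqrt{m}$ term is absorbed and the cost matches Theorem~\ref{thm:mmult2}.

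The main obstacle is not the summation, which is routine and parallels Theorem~\ref{thm:geplu}, but the two correctness points that license this cost model: confirming that the semiring-to-ring substitution with a final $\min$ faithfully computes the transitive closure, and verifying that the write set of \func{D} in each outer iteration is disjoint from its read set, so that the whole column of $X_{ik}$ blocks can be streamed as one tall matrix against a single resident weight block $X_{kj}$ without violating a data dependence. Once these are settled, the per-iteration bookkeeping, together with the harmless $-1$ offsets arising from excluding $i=k$ or $j=k$, yields the claimed time.
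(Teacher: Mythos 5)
Your proposal is correct and follows essentially the same route as the paper, which simply carries the per-iteration accounting of the Gaussian-elimination proof (Theorem~\ref{thm:geplu}) over to the blocked \func{Transitive-Closure} procedure: CPU cost $\BT{m^{3/2}}$ per call to \func{A}, \func{B}, \func{C}, one tall $(n-\sqrt{m})\times\sqrt{m}$ streamed multiply per resident weight block in \func{D}, and a sum over the $n/\sqrt{m}$ outer iterations. The two correctness points you isolate --- the ring-with-final-$\min$ replacement of the Boolean semiring in \func{D} and the disjointness of \func{D}'s read and write blocks --- are exactly the observations the paper states in the prose preceding the theorem, so nothing is missing.
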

\begin{proof}
The proof is similar to the proof of Theorem \ref{thm:geplu}
and is obtained by analyzing the steps of \func{Transitive-Closure}
shown in Figure \ref{fi:tc-tcu}.
\end{proof}

 \subsection{All Pairs Shortest Distances (APSD)}
 \label{ssec:ap-sp}

We discuss TCU implementation of Seidel's algorithm \cite{Seidel1995all} for computing APSD in an unweighted undirected graph $G = (V, E)$, where $n = |V|$ and vertices are numbered by unique integers from 1 to $n$. 

Let $A$ be the adjacency matrix of $G$. The adjacency matrix $A^{(2)}$ of the squared graph $G^{(2)} = (V, E^{(2)})$ is obtained by squaring $A$ and replacing all nonzero entries in the square matrix by 1. Indeed, for any given pair of vertices $u, v \in V$, $(u, v) \in E^{(2)}$ (i.e., $A^{(2)}[u, v] = 1$) provided there exists a vertex $w \in V$ such that $(u, w), (w, v) \in E$ (i.e., $A[u, w] = A[w, v] = 1$). Let $\delta(u, v)$ and $\delta^{(2)}(u, v)$ represent the shortest distance from $u$ to $v$ in $G$ and $G^{(2)}$, respectively. Seidel shows that if all $\delta^{(2)}$ values are known one can correctly compute all $\delta(u, v)$ values from them. Let $D^{(2)}$ be the distance matrix of $G^{(2)}$ and let $C = D^{(2)} A$. Then Seidel shows that for any pair $u, v \in V$, $\delta(u, v) = 2 \delta^{(2)}(u, v)$ provided $\sum_{(w, v) \in E}{D^{(2)}[u, w]} = C[ u, v ] \geq deg(v) \times D^{(2)}[u, v]$, and $\delta(u, v) = 2 \delta^{(2)}(u, v) - 1$ otherwise, where $deg(v)$ is the number of neighbors of $v$ in $G$. Thus the distance matrix $D$ of $G$ can be computed from $D^{(2)}$ by computing $C = D^{(2)} A$. The $D^{(2)}$ matrix is computed recursively. The base case is reached when we encounter $G^{(h)}$ where $h = \lceil{ \log_{2}(n) } \rceil$. It's adjacency matrix $A^{(h)}$ has all $1$'s, and it's distance matrix is simply $D^{(h)} = A^{(h)} - I_{n}$. Clearly, there are $h$ levels of recursion and in each level we compute two products of two $n \times n$ matrices. Hence, using Theorem \ref{thm:mmult} we obtain the following.

\begin{theorem}\label{thm:exact-apsd-uu}
All pairs shortest distances of an $n$-vertex
unweighted undirected graph can be computed
in the $(m,\ell)$-TCU model in time
$$
T(n) = \BO{\left({n^2}/{m}\right)^{\omega_0}(m+\ell) \log{n}}.
$$
\end{theorem}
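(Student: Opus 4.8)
The plan is to bound the TCU running time of Seidel's recursion level by level and then sum over the levels. First I would fix the recursion depth: since squaring the graph roughly halves all shortest distances (up to the $\pm 1$ correction), after $h = \lceil \log_2 n \rceil = \BO{\log n}$ squarings the graph $G^{(h)}$ is complete and we reach the base case $D^{(h)} = A^{(h)} - I_n$, computable in $\BO{n^2}$ time. Hence the whole computation consists of $\BO{\log n}$ levels, each performing one ``descent'' step (squaring $A$ to form $A^{(2)}$) before the recursive call and one ``ascent'' step (forming $C = D^{(2)} A$ and reconstructing $D$) after it returns.

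Next I would isolate the dominant cost at each level. The two expensive operations are exactly the two multiplications of $n \times n$ integer matrices promised in the setup: the squaring $A \cdot A$ and the product $D^{(2)} \cdot A$. Everything else at a level is $\BO{n^2}$ elementwise or scanning work done on the CPU: thresholding the nonzero entries of $A^2$ to $1$ to obtain $A^{(2)}$, computing each degree $deg(v)$, evaluating the test $C[u,v] \geq deg(v)\, D^{(2)}[u,v]$ for every pair $(u,v)$, and setting $\delta(u,v)$ to $2\delta^{(2)}(u,v)$ or $2\delta^{(2)}(u,v)-1$ accordingly.

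To cost the two matrix multiplications I would invoke Theorem~\ref{thm:mmult}, which multiplies two $\sqrt{N}\times\sqrt{N}$ matrices in time $\BO{(N/m)^{\omega_0}(m+\ell)}$ via a Strassen-like base algorithm. Setting $\sqrt{N} = n$, i.e. $N = n^2$, each $n\times n$ product costs $\BO{(n^2/m)^{\omega_0}(m+\ell)}$. Two details deserve a line of justification: the products are over the integer ring, so a Strassen-like algorithm is applicable; and all intermediate entries stay $\mathrm{poly}(n)$ (entries of $A^2$ are at most $n$, entries of $C = D^{(2)}A$ at most $n^2$), so each fits in a single $\kappa = \BOM{\log n}$-bit machine word and the model's matrix-multiply primitive applies unchanged.

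Finally I would collapse the per-level cost and sum. In the regime where the TCU is actually used, namely $n^2 \geq m$, one checks that $(n^2/m)^{\omega_0}(m+\ell) \geq n^2$ (since $\omega_0 > 1$), so the $\BO{n^2}$ auxiliary work is absorbed into the matrix-multiplication term and each level costs $\BO{(n^2/m)^{\omega_0}(m+\ell)}$. Multiplying by the $\BO{\log n}$ levels gives the stated bound. The argument is essentially mechanical; the only points needing genuine care — and hence the ``hard part'' — are verifying that all the bookkeeping stays within the $\BO{n^2}$ budget so that it is dominated, and confirming that the integer entries generated across the recursion never exceed one machine word.
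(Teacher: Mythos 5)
Your proposal is correct and follows essentially the same route as the paper: $\BO{\log n}$ recursion levels of Seidel's algorithm, two $n\times n$ products per level costed via Theorem~\ref{thm:mmult} with $N=n^2$, giving $\BO{(n^2/m)^{\omega_0}(m+\ell)\log n}$. The paper states this in one line; your additional checks (that the $\BO{n^2}$ bookkeeping is dominated and that entries fit in a $\kappa=\BOM{\log n}$-bit word) are sound and only make the argument more complete.
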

\begin{proof}
The claim follows since we have $\BO{\log n}$ matrix multiplications of size $n\times n$, each one requiring $\BO{\left({n^2}/{m}\right)^{\omega_0}(m+\ell)}$ for Theorem \ref{thm:mmult}.
\end{proof}

\subsection{Discrete Fourier Transform}\label{sec:FFT}
The Discrete Fourier Transform $y$ of an $n$-dimensional (column) vector $x$ can be defined as the matrix-vector product $y = x^T\cdot W$, where $W$ is the Fourier matrix (or DFT matrix) and $T$ denotes the transpose of a matrix/vector. 
The Fourier matrix $W$ is a symmetric $n\times n$ matrix where the entry at row $r$ and column $c$ is defined as: $W_{r,c}=e^{-(2\pi i/n)rc}$.

The Cooley-Tukey algorithm is an efficient and recursive algorithm for computing the DFT of a vector. 
The algorithm arranges $x$ as an $n_1\times n_2$ matrix $X$ (in row-major order) where $n=n_1\cdot n_2$;  each column $X_{*,c}$ is replaced with its DFT and then each entry $X_{r,c}$ is multiplied by the twiddle factor $w_n^{rc}$; finally, each row $X_{r,*}$ is replaced by its DFT and the DFT of $x$ is given by reading the final matrix $X$ in column-major order.

For simplicity, we assume that the TCU model can perform operations (e.g., addition, products) on complex numbers; this assumption can be easily removed with a constant slow down in the running time: for instance, the multiplication between $\sqrt{m}\times \sqrt{m}$ complex matrices can be computed with four matrix multiplications and two sums  of real values. 

To compute the DFT of $x$ using a $(m,\ell)$-TCU, we use the  Cooley-Tukey algorithm where we set $n_1=\sqrt{m}$ and $n_2 =  n/\sqrt{m}$ (we assume all values to be integers). Then, we use the tensor unit for computing the $n_2$ DFTs of size $n_1=\sqrt{m}$ by computing  $X^T\cdot W_{\sqrt{m}}$.
Then, we multiply each element in $X$ by its twiddle factor and transpose $X$. 
Finally, we compute the $n_1$ DFTs of size $n_2$: 
if $n_2 > \sqrt{m}$, the DFTs are recursively computed; otherwise, if $n_2 \leq \sqrt{m}$, the $n_1$ DFTs are computed with the multiplication $X^T \cdot W_{{n_2}}$ by using the tensor unit.  

\begin{theorem}\label{thm:dft}
The DFT of a vector with $n$ entries can be computed in a $(m,\ell)$-TCU in time
$
T(n) = \BO{(n + \ell) \log_m n}.
$
\end{theorem}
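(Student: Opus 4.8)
The plan is to analyze the recursive radix-$\sqrt{m}$ Cooley-Tukey scheme described above stage by stage, arguing that each stage costs $\BO{n+\ell}$ while the recursion has depth $\BO{\log_m n}$. First I would fix the recursion structure: writing $n=\sqrt{m}\cdot n_2$, one call transforms a length-$n$ vector into $n_2=n/\sqrt{m}$ column-DFTs of size $\sqrt{m}$, a twiddle-factor rescaling, a transpose, and then $\sqrt{m}$ independent length-$n_2$ DFTs handled recursively. Since every recursive call shrinks the transform length by a factor $\sqrt{m}$ down to the base case $n_2\le\sqrt{m}$, the number of recursion levels is $\log_{\sqrt{m}} n = 2\log_m n = \BO{\log_m n}$.

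Next I would bound the work incurred at a single level of the recursion tree. The crucial observation is that \emph{every} subproblem appearing at a given level performs its column-DFT step against the same Fourier matrix $W_{\sqrt{m}}$ of size $\sqrt{m}\times\sqrt{m}$. Because the subproblems at one level partition the $n$ input entries, I can stack all their tall operand matrices into a single $(n/\sqrt{m})\times\sqrt{m}$ matrix and compute all of their size-$\sqrt{m}$ DFTs with one tensor-unit call $X^T\cdot W_{\sqrt{m}}$, exploiting the asymmetric (tall left matrix) capability of the $(m,\ell)$-TCU. This single call costs $\BO{(n/\sqrt{m})\sqrt{m}+\ell}=\BO{n+\ell}$; the twiddle-factor multiplications, transposes, and data rearrangements touch $\BO{n}$ words and add only $\BO{n}$. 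Hence each level costs $\BO{n+\ell}$, and summing over the $\BO{\log_m n}$ levels yields $T(n)=\BO{(n+\ell)\log_m n}$, as claimed. The base-case level, where $n_2\le\sqrt{m}$ and the transforms are done directly as $X^T\cdot W_{n_2}$, is batched in the same way and also costs $\BO{n+\ell}$.

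The step I expect to be the main obstacle is controlling the accumulated latency $\ell$, and this is precisely where the batching argument does the real work. If one instead solved the naive recurrence $T(n)=\sqrt{m}\,T(n/\sqrt{m})+\BO{n+\ell}$ --- charging a separate latency $\ell$ to each of the geometrically many subproblems --- the latency term would sum to $\BO{\ell\, n/\sqrt{m}}$, which is far larger than the target $\BO{\ell\log_m n}$. The resolution is to note that all tensor calls at a level share the weight matrix $W_{\sqrt{m}}$ (or $W_{n_2}$ at the base), so they may be issued as one tall multiplication and pay $\ell$ only once per level rather than once per subproblem. I would make this rigorous by reorganizing the computation level-by-level (equivalently, amortizing the latency across sibling subproblems), after which the $\BT{n}$-per-level element work and the single $\ell$-per-level latency combine to give the stated $\BO{(n+\ell)\log_m n}$ bound. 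Throughout I assume $n\ge m$ so that the stacked left operand has at least $\sqrt{m}$ rows as required by the model; smaller inputs reduce to a single direct tensor call.
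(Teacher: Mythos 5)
Your algorithm is the paper's algorithm --- radix-$\sqrt m$ Cooley--Tukey with the batch of size-$\sqrt m$ column DFTs realized as one tall product $X^T\cdot W_{\sqrt m}$ --- but your analysis takes a genuinely different, and in fact more defensible, route. The paper formalizes the cost with the per-subproblem recurrence $T(n)=\sqrt m\,T(n/\sqrt m)+\BO{n+\ell}$ with base case $\BO{m+\ell}$ for $n\le m$, and asserts the theorem follows; taken literally, that recurrence charges a separate latency $\ell$ to every node of the recursion tree, so its latency contribution is dominated by the $\BT{n/m}$ leaves and the recurrence solves to $\BT{n\log_m n+\ell n/m}$, which exceeds the claimed $\BO{(n+\ell)\log_m n}$ when $\ell$ is large relative to $m\log_m n$. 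Your level-by-level reorganization --- stacking the left operands of all sibling subproblems into a single $(n/\sqrt m)\times\sqrt m$ matrix, which the model permits and which is legitimate precisely because all siblings at a level share the same right matrix $W_{\sqrt m}$ --- is exactly what makes the latency term $\ell\log_m n$ rather than $\ell\cdot\mathrm{poly}(n)$; it is the step the paper's prose (``in each recursive level, we load matrix $B=W_{\sqrt m}$ \ldots multiplying an $n/\sqrt m\times\sqrt m$ matrix'') gestures at but its written recurrence does not capture. So you have correctly identified and repaired the one delicate point rather than merely restating the paper. The only detail worth adding is at the base level: the shared right operand $W_{n_2}$ has $n_2\le\sqrt m$, so to issue the base transforms as a single batched tensor call you should note that $W_{n_2}$ can be zero-padded (or packed block-diagonally) into a $\sqrt m\times\sqrt m$ operand, after which that level also costs $\BO{n+\ell}$.
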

\begin{proof}
In each recursive level, we load matrix $B=W_{\sqrt{m}}$ at the beginning and then compute the $n_2$ DFTs of $n_1$ entries by multiplying an $n/\sqrt{m}\times \sqrt{m}$ matrix with $W_{\sqrt{m}}$ in time $\BO{n+\ell}$.
The remaining operations (i.e., matrix transposition and multiplication by twiddle factors) cost $\BO{n}$ time.
The running time is thus given by the next recurrence, from which follows the statement.
$$
T(n) = \left\{ 
\begin{array}{ll}
\sqrt{m} T(n/\sqrt{m}) + \BO{n + \ell} & \textnormal{if } n>m\\
\BO{m+\ell} &  \textnormal{if } n\leq m
\end{array}
\right.
$$
We observe that we use as base case $n\leq m$ and not $n\leq \sqrt{m}$: indeed, when $n\leq m$ we use the tensor core for computing the $n_2$ DFTs of size $n_1=\sqrt{m}$, but also the $n_1$ DFTs of size $n_2\leq \sqrt{m}$. 
This analysis gives a tighter upper bound than using $n\leq \sqrt{m}$ as base case.

\end{proof}

We observe that the above algorithm generalizes the approach used in~\cite{Sorna18} on an NVIDIA Verdi architecture. 
The paper decomposes the vector using $n_1=4$ and $n_2=n/4$ and then solves subproblems of size 4 using a tensor core, since a TC can multiply  $4\times 4$ matrices. 

\subsection{Stencil computations}
Stencil computations are  iterative kernels over a $d$-dimensional array, widely used in scientific computing. 
Given a $d$-dimensional matrix $A$, a stencil computation performs a sequence of sweeps over the input: in a sweep, each cell is updated with a function $f(\cdot)$ of the values of its neighboring cells at previous sweeps. 
An example of  stencil computation is the discretization of the 2D heat equation, where each entry at time $t$ is updated as follows:
\begin{align*}
&A_t[x,y] =  A_{t-1}[x,y] + \\ 
&+\frac{\alpha \Delta t}{\Delta x^2}(A_{t-1}[x-1,y]+A_{t-1}[x+1,y]-2A_{t-1}[x,y])  \\
&+\frac{\alpha \Delta t}{\Delta y^2}(A_{t-1}[x,y-1]+A_{t-1}[x,y+1]-2A_{t-1}[x,y])
\end{align*}
where $\alpha, \Delta t, \Delta x^2, \Delta y^2$ are suitable constant values given by the heat diffusion equations and by the discretization step.

For the sake of simplicity, we assume $d=2$ and that each update depends only on the values of the cell and of its eight (vertical/horizontal/diagonal) neighbors at previous sweep.
However, all techniques presented in this section extend to any $d=\BO{1}$ and to any update function that depends on a constant number of neighbors.

Given  $n,k\geq 1$, an \emph{$(n,k)$-stencil computation}, over an input $\sqrt{n}\times\sqrt{n}$ matrix $A$ is the matrix $A_k$ obtained by the following iterative process: let $A_0=A$ and $1\leq t \leq k$; matrix $A_t$ is defined by computing, for each $0\leq i,j<\sqrt{n}$, 
$A_t[i,j]=f(i,j,A_{t-1})$
where $f$ is a suitable function of cells $A_{t-1}[i+\alpha,j+\beta]$ with  $\alpha,\beta \in \{-1,0,1\}$.
We say that a stencil computation is \emph{linear} if $f$ is a linear, that is
$$A_t[i,j] = \sum_{\alpha,\beta \in \{-1,0,1\}} w_{\alpha,\beta} A_{t-1}[i+\alpha,j+\beta].$$
where $w_{\alpha,\beta}$ are suitable real values. 
The above stencil computation for approximating heat equations is linear. 
We assume $k$ to be even and that all values are integers.

By unrolling the update function of a linear $(n,k)$-stencil computation, each entry $A_k[i,j]$ can be represented as a linear combination of $\BO{k^2}$ entries of $A$, specifically all entries $(i',j')$ in $A$ where $|i-i'|\leq k$ and $|j-j'|\leq k$. 
That is, there exists a $(2k+1) \times (2k+1)$  matrix $W$ such that
$$A_t[i,j]  = \sum_{-k\leq \alpha,\beta\leq k} W[k+\alpha, k+\beta] A[i+\alpha, j+\beta].$$

We now show that a linear $(n,k)$-stencil on a matrix $A$ reduces to $\BT{n/k^2}$ convolutions of size $\BO{k^2}$, which are then computed with the TCU algorithm for DFT in Theorem~\ref{thm:dft}.
Let us assume that matrix $A$ is split into submatrices $A_{r,c}$ of size $k\times k$, with $0\leq r,c < \sqrt{n}/k$; similarly, let $A_{k,r,c}$ denote the $k\times k$ submatrices of $A_{k}$.
For each $A_{r,c}$, we define the following matrix $A'_{r,c}$ of size  $3k\times 3k$:
$$
A'_{r,c} = \left[
\begin{array}{lll}
A_{r-1, c-1} & A_{r-1, c} & A_{r-1, c+1}\\
A_{r, c-1} & A_{r, c} & A_{r, c+1}\\
A_{r+1, c-1} & A_{r+1, c} & A_{r+1, c+1}
\end{array}
\right].
$$ 
where we assume that a matrix $A_{i,j}$ is a zero matrix when $i$ and $j$ are not in the range $[0, \sqrt{n}/k)$.
We then compute the circular discrete convolution $A^*_{r,c} = A'_{r,c} \circledast W'$, where $W'$ is a $3k\times 3k$ matrix obtained by
 flipping $W$ and by adding $k/2$ (resp., $k/2-1$) rows and columns of zeros on the left and top (resp., right and bottom) sides of $W$.\footnote{With a slight abuse of notation, given two $n\times n$ matrices $A$ and $B$ with $n$ even, we define $(A \circledast B)[i,j] = \sum_{\alpha,\beta \in [-n/2,n/2)} A[(i+\alpha)\mod n,(j+\beta)\mod n]W[n/2-\alpha, n/2-\beta]$. In the paper, we omit the mod operation from the notation.}
Finally, we set  $A_{k,r,c}$ to be the $k\times k$ matrix obtained from $A^*_{r,c}$  by selecting the $i$-row and $j$-th column for all  $k\leq i,j < 2k$. 
By repeating the following procedure for each submatrix $A_{r,c}$, we get the output matrix $A_k$.

Each convolution can be efficiently computed by exploiting the convolution theorem and the DFT algorithm of Theorem~\ref{thm:dft}. 
We indeed recall that a 2-dimensional DFT is given by computing a 1-dimensional DFT for each row and for each column.
If $W$ is given, we have the following:

\begin{lemma}\label{lem:stencil}
Given a linear $(n,k)$-stencil computation and its weight matrix $W$, then the stencil can be computed in a $(m,\ell)$-TCU in time
$
T(n,t) = \BO{(n + \ell) \log_m k}. 
$
\end{lemma}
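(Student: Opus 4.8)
The plan is to take the reduction already set up above at face value---a linear $(n,k)$-stencil amounts to $\BT{n/k^2}$ independent circular convolutions $A^*_{r,c} = A'_{r,c} \circledast W'$ of size $3k\times 3k$, from each of which we read off the central $k\times k$ block---and to evaluate every such convolution through the convolution theorem. Concretely, I would compute the two-dimensional DFT of each $A'_{r,c}$, multiply it entrywise by the (precomputed) two-dimensional DFT of $W'$, and apply the inverse two-dimensional DFT. Since a two-dimensional DFT of a $p\times p$ array is just a one-dimensional DFT of each of its $p$ rows followed by a one-dimensional DFT of each of its $p$ columns, every transform involved is an instance of the size-$\BO{k}$ DFT of Theorem~\ref{thm:dft}.

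First I would settle correctness. Each output entry of the central block depends only on input entries within distance $k$, and by construction the window $A'_{r,c}$ contains exactly those entries; the flipping-and-zero-padding defining $W'$ is arranged so that, restricted to the central block at coordinates $[k,2k)\times[k,2k)$, the circular convolution reproduces the linear combination $\sum_{\alpha,\beta} W[k+\alpha,k+\beta]A[i+\alpha,j+\beta]$ with no contribution from wrap-around terms. This is the one genuinely delicate point: one must check that the zero band in $W'$ is wide enough that indices in the central block never reach, modulo $3k$, the spurious periodic copies of the data. Granting the construction in the text, this reduces to a direct index check.

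The cost is then a bookkeeping exercise dominated by latency amortization, which I expect to be the main obstacle. Computing each two-dimensional DFT block by block in isolation would, by Theorem~\ref{thm:dft}, cost $\BO{(k^2+\ell)\log_m k}$ per block and hence $\BO{(n + n\ell/k^2)\log_m k}$ in total, which is too large in the latency term. To remove it I would batch: all row transforms of all $\BT{n/k^2}$ blocks have the same size $\BO{k}$ and use the same weight matrix $W_{\sqrt m}$, so they can be run together as a single streamed computation that, at each of the $\BO{\log_m k}$ recursion levels, loads $W_{\sqrt m}$ once and pushes all $\BO{n}$ data through the tensor unit, costing $\BO{n+\ell}$ per level by the asymmetric streaming behavior of the $(m,\ell)$-TCU. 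The column transforms, the inverse transforms, and the single precomputation of the DFT of $W'$ are handled identically, while the entrywise multiplications add only $\BO{n}$. Summing over the $\BO{\log_m k}$ levels yields the claimed running time $\BO{(n+\ell)\log_m k}$.
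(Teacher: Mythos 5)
Your proposal is correct and follows essentially the same route as the paper's proof: the same reduction to $\BT{n/k^2}$ circular convolutions evaluated via the convolution theorem and Theorem~\ref{thm:dft}, the same central-block index check for correctness, and the same key step of amortizing latency by batching all the size-$\BO{k}$ row (and then column) DFTs across blocks into tall left matrices at each of the $\BO{\log_m k}$ recursion levels, giving $\BO{n+\ell}$ per level. The only difference is cosmetic bookkeeping in the rejected per-block cost estimate before batching.
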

\begin{proof}
The correctness follows by observing that, by definition of convolution, we have for each  $k\leq i,j < 2k$: 
\begin{align*}A^*_{r,c}[i,j] &= \hspace{-1em}\sum_{\substack{\alpha,\beta\in \\ [-\frac{3k}{2},\frac{3k}{2})}}  A'_{r,c}[i+\alpha,j+\beta] W'\left[\frac{3k}{2}-\alpha,\frac{3k}{2}-\beta\right]\\ 
 &=\sum_{\substack{\alpha,\beta\in \\ [-k,k]}} A'_{r,c}[i+\alpha,j+\beta] W\left[k+\alpha,k+\beta\right] 
\\&= A_k[i,j].
\end{align*}

Each convolution has size $\BT{k^2}$ and is computed with $\BT{k}$  DFT of $\BT{k}$ elements and $\BT{k^2}$ element-wise products. 
The time per convolution is then $\BO{(k^2+\ell k)\log_m k}$ time, while the  time of the entire algorithm is $\BO{(n+\ell n/k)\log_m k}$. 
However, the latency cost can be reduced by observing that all the $\BT{n/k}$ DFTs of $\BT{k}$ elements can be computed concurrently; 
then, for each of the $\BO{\log_m k}$ recursive levels of the DFT algorithm, we use a 2 tall left matrices of size $\BT{n/k}\times \BT{k}$ for computing the $\BO{n/k}$ DFTs (we need two matrices because we first compute the DFTs of the rows, and then on the columns).
The claimed result follows.  
\end{proof}

The weight matrix can be trivially computed in $\BO{k^3}$ time by recursively unrolling function $f$. 
However, as soon as $k\geq (n\log_m k)^{1/3}$, the cost for computing $W$ dominates the cost of the stencil algorithm.
A more efficient solution follows by representing $W$ as the powering of a  bivariate polynomial and then using the DFT to compute it, as shown in the following lemma.

\begin{lemma}\label{lem:weights}
The weight matrix of a linear $(n,k)$-stencil computation can be computed in a $(m,\ell)$-TCU in time $\BO{k^2 \log_m k + \ell \log k}$.
\end{lemma}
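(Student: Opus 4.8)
The plan is to reduce the computation of the weight matrix $W$ to raising a fixed bivariate polynomial to the $k$-th power, and then to carry out this powering by repeated squaring using the DFT-based multiplication of Theorem~\ref{thm:dft}. First I would encode the single-sweep stencil as the bivariate Laurent polynomial $p(x,y)=\sum_{\alpha,\beta\in\{-1,0,1\}} w_{\alpha,\beta}\,x^{\alpha}y^{\beta}$, and clear the negative exponents by setting $q(x,y)=xy\cdot p(x,y)$, a genuine polynomial of degree $2$ in each variable. Because one sweep of a \emph{linear} stencil convolves the current array with the $3\times 3$ kernel $(w_{\alpha,\beta})$, applying $k$ sweeps convolves with the $k$-fold self-convolution of that kernel; in polynomial language the cumulative kernel is exactly $p(x,y)^{k}$. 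Hence the entries of the $(2k+1)\times(2k+1)$ matrix $W$ are, up to the index shift by $k$, the coefficients of $q(x,y)^{k}$, a bivariate polynomial of degree $2k$ in each variable and thus with $\BO{k^2}$ coefficients.

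Next I would compute $q^{k}$ by repeated squaring, performing $\BO{\log k}$ bivariate polynomial products dictated by the binary expansion of $k$. Each product of two bivariate polynomials of degree $d$ in each variable I would evaluate with a two-dimensional DFT: pad to an $\BO{d}\times \BO{d}$ grid, take a one-dimensional DFT along every row and every column (Theorem~\ref{thm:dft}), multiply the two transforms pointwise, and invert. Since the operand at the $i$-th squaring has degree $\BT{2^{i}}$ in each variable, the grid sizes grow geometrically up to $\BT{k}\times\BT{k}$. Summing the compute cost of Theorem~\ref{thm:dft} over the squarings gives $\sum_{i} \BO{4^{i}\log_m 2^{i}}$, a geometric series dominated by its last term $\BO{k^2\log_m k}$, which is the first term of the claimed bound.

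The delicate part, and the step I expect to be the main obstacle, is the latency. Charging the latency of Theorem~\ref{thm:dft} separately to each squaring would give $\sum_{i} \BO{\ell\log_m 2^{i}}=\BO{\ell\,(\log k)(\log_m k)}$, which exceeds the target $\BO{\ell\log k}$. To tighten it I would batch the $\BT{d}$ row DFTs (and then the $\BT{d}$ column DFTs) of each two-dimensional transform into a single tall-left-matrix computation exactly as in the proof of Lemma~\ref{lem:stencil}, so that every recursion level of the DFT is realized by one tensor pass, and I would observe that all squarings whose grid already fits in a single $\sqrt{m}\times\sqrt{m}$ tile (those with $2^{i}\le\sqrt{m}$) cost only $\BO{\ell}$ latency each. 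The bookkeeping that collapses the per-level latency contributions to an overall $\BO{\ell\log k}$, amortizing the repeated loading of the fixed DFT matrix $W_{\sqrt{m}}$ across levels and squarings, is where the careful work lies; everything else is the geometric-series estimate above together with the $k$-even assumption, which keeps all padded grid sizes integral.
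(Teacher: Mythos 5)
Your proposal follows essentially the same route as the paper: the paper likewise encodes the unrolled update as the bivariate (Laurent) polynomial $P(x,y)=\sum_{\alpha,\beta\in\{-1,0,1\}} w_{\alpha,\beta}x^{\alpha}y^{\beta}$, obtains $W$ as the coefficients of $P(x,y)^k$, and computes this power by repeated squaring in $\BT{\log k}$ steps, each a convolution of geometrically varying size done with the TCU DFT algorithm. The latency accounting you single out as the delicate step is exactly the point the paper's proof passes over without detail (it asserts the $\BO{\ell\log k}$ term directly), so your treatment is, if anything, more explicit about where the remaining bookkeeping lies.
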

\begin{proof}
Let $f_{u}$ be function $f$ after unrolling it $u$-th times,
that is $A_k[i,j] = f_u(i,j,A_{k-u})$.
We assign to each $f_u$ a bivariate polynomial $F_u(x,y) = \sum_{-k\leq \alpha,\beta\leq k} w_u[\alpha,\beta] x^\alpha y^\beta$, where $w_u[\alpha,\beta]$ are the coefficients of $A_{k-u}[i+\alpha,j+\beta]$ in $f_u$.
By setting $F_0(x,y)= 1$, it follows by induction that $F_u(x,y) = F_{u-1}(x,y) P(x,y)$
where $P(x,y)=\sum_{\alpha,\beta\in\{-1,0,1\}} w_{\alpha,\beta} x^{\alpha}y^\beta$.
Entry $W[i,j]$ is then given by the coefficient of the $x^{i-k}y^{j-k}$ in $F_{k}=F_0(x,y)*P(x,y)^k$.
Since $P(x,y)^k = P(x,y)^{\lceil k/2 \rceil})^2 P(x,y)^{k \mod 2} $, we compute recursively the coefficient in $\BT{\log k}$ recursive calls, each one performing a convolution using the TCU algorithm for DFT of geometrically decreasing size.
We then get the main statement.
\end{proof}

By the previous two results, we then have the main result:
\begin{theorem}
Given a linear $(n,k)$-stencil computation with $k\leq n$, then the stencil can be computed in a $(m,\ell)$-TCU in time
$
T(n,t) = \BO{n \log_m k + \ell \log k}. 
$ 
\end{theorem}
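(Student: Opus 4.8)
The plan is to obtain the theorem as an immediate consequence of the two preceding lemmas, since together they already cover the two disjoint phases of the computation: Lemma~\ref{lem:weights} produces the weight matrix $W$, and Lemma~\ref{lem:stencil} evaluates the stencil once $W$ is in hand. The second phase consumes nothing from the first except $W$ itself, so the overall cost is simply the sum of the two bounds, and the whole task reduces to an asymptotic simplification.

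Concretely, I would first invoke Lemma~\ref{lem:weights} to compute $W$ in time $\BO{k^2 \log_m k + \ell \log k}$, and then run the algorithm of Lemma~\ref{lem:stencil} on $A$ with this $W$, producing $A_k$ in time $\BO{(n+\ell)\log_m k}$. Adding the two phases gives
$$
T(n,k) = \BO{k^2 \log_m k + \ell \log k + (n+\ell)\log_m k}.
$$
The remaining work is to collapse this into the claimed $\BO{n \log_m k + \ell \log k}$, which comes down to three absorptions. The $n\log_m k$ piece of $(n+\ell)\log_m k$ already appears in the target. The accompanying $\ell \log_m k$ is dominated by $\ell \log k$, because $\log_m k = (\log k)/(\log m) \leq \log k$ for every $m \geq 2$. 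What is left is the $k^2 \log_m k$ term from the weight computation, which I want to fold into $n\log_m k$.

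The one step that needs care — and the only genuine obstacle — is this last absorption, since it requires $k^2 = \BO{n}$ rather than merely $k = \BO{n}$. Here I would lean on the structure underlying Lemma~\ref{lem:stencil}: the input is a $\sqrt{n}\times\sqrt{n}$ matrix partitioned into $k\times k$ tiles, so the tiled--convolution scheme is only defined for $k \leq \sqrt{n}$, which is exactly the regime giving $k^2 \leq n$. Under this reading the hypothesis on $k$ guarantees that the cost of building $W$ never dominates, the $k^2 \log_m k$ term is subsumed by $n \log_m k$, and all three terms collapse to the stated bound. I would therefore state the result for $k \leq \sqrt{n}$ (equivalently $k^2 \leq n$), remark that this is precisely the range in which the stencil decomposition makes sense, and conclude $T(n,k) = \BO{n \log_m k + \ell \log k}$.
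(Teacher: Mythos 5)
Your proof takes essentially the same route as the paper, whose entire argument is the one-line ``sum the time complexities of Lemmas~\ref{lem:weights} and~\ref{lem:stencil}''; you simply carry out the resulting absorptions explicitly. Your observation that folding the $k^2\log_m k$ term into $n\log_m k$ requires $k\le\sqrt{n}$ --- which is in any case forced by the partition of the $\sqrt{n}\times\sqrt{n}$ input into $k\times k$ tiles --- is a legitimate point that the paper's stated hypothesis $k\le n$ and its terse proof gloss over.
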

\begin{proof}
The claim follows by summing the time complexities of Lemmas~\ref{lem:weights} and~\ref{lem:stencil}.
\end{proof}

\subsection{Integer multiplication}
We now study how to multiply two long integers by exploiting a $(m,\ell)$-TCU. 
The input is given by two integers $a$ and $b$ of $n$ bits each (without loss of generality, we assume both integers to be positive and $n>m$), and the output is the binary representation of $c=a*b$, of size  $2n-1$. 
For this problem, we introduce in the design a third parameter $\kappa$, which is the bit length of a memory word in the  TCU model.
We assume that $\kappa = \BOM{\log n}$, that is there are enough bits in a word to store the input/output size.
It is easy to see that the tensor unit can multiply  matrices of (positive) integers of  $\kappa'=\kappa/4$ bits without overflow: the largest integer in the output matrix using $\kappa'$ bits is $2^{2\kappa'}\sqrt{m}$ which requires $2\kappa'+\log \sqrt{m}<\kappa$ (if $n>>m$, then $\kappa'=\kappa/2  -1$ suffices).

We initially show how to speed up the long integer multiplication algorithm~\cite{KleinbergTardos16}, also known as the schoolbook algorithm, by exploiting the tensor unit.
Then, we will use this algorithm to improve the recursive Karatsuba algorithm~\cite{Karatsuba63}.

Let  $A(x) = \sum_{i=0}^{n'-1} A_i x^i$ be a polynomial where $n'=n/\kappa'$ and $A_i = (a_{(i+1)\kappa'-1}\ldots a_{i\kappa'})_2$ is the integer given by the $i$th segment of $\kappa'$ bits of $a$. Let $B(x)$  be defined similarly for $b$. 
We have that $a=A(2^{\kappa'})$ and $b=B(2^{\kappa'})$.
We define $C(x) = A(x)\cdot B(x)$ and we observe that $c$ is given by evaluating $C(2^{\kappa'})$. 
Note that $A(X)$ and $B(X)$ have degree $n'-1$, while $c$ has degree at most $(2n-1)/\kappa'\leq 2n'-1$.
The coefficients of $C(x)$ can be computed with the matrix multiplication $C=A\cdot B$ where: 1) $B$ is the column vector with the $n'$ coefficients of $B(X)$; 2) $A$ is a $(2n'-1)\times n'$ matrix where 
$A_{i,j}=A_{n'-i+j-1}$ and we assume that $A_h=0$ if $h<0$ or $h\geq n'$.

The product $C=A\cdot B$ cannot exploit TCU since $B$ is a vector. 
To fully exploit an $(m,\ell)$-TCU, we show how to calculate $C$ coefficients via the multiplication $C' = A'\cdot B'$ where $A$ is a $(n'+\sqrt{m}-1)\times \sqrt{m}$ matrix and $B$ is a $ \sqrt{m} \times n'/\sqrt{m}$ matrix.
\begin{itemize}
    \item Matrix $B'$ follows by considering vector $B$ as the column major representation of a $ \sqrt{m} \times n'/\sqrt{m}$ matrix, that is $B'_{i,j} = B_{n'-i-j\sqrt{m}-1}$.
    \item Matrix $A'$ is given by considering all segments  of length $\sqrt{m}$ in the sequence $0_{\sqrt{m}-1}, A_0, A_1, \ldots$ $ A_{n'-1},0_{\sqrt{m}-1}$, where $0_{\sqrt{m}-1}$ denotes a sequence of $\sqrt{m}-1$ zeros. 
More formally,  the $i$th row $A'_{i,*}$ is $[A_{n'-i-1},A_{n'-i-2},\ldots A_{n'-i-\sqrt{m}}]$,  where we assume again that $A_h=0$ if $h<0$ or $h\geq n'$.
\end{itemize}

Then, we compute $C' = A'\cdot B'$ with the algorithm for dense matrix multiplication of Theorem~\ref{thm:mmult2} (or equivalently the algorithms of Theorem~\ref{thm:mmult}): We decompose $B'$ into  into $n'/m$ submatrices of size $\sqrt{m}\times \sqrt{m}$ and then compute $n'/m$ products of a $(n'+\sqrt{m}-1)\times \sqrt{m}$ matrix with a $\sqrt{m}\times \sqrt{m}$ matrix. 
Finally, the coefficient of the $x^h$ indeterminate in $C(x)$, for each $0\leq h < 2n'-1$,  follows by summing all entries in $C'_{i,j}$ such that $h = 2(n'-1)-i-j\sqrt{m}$.
Finally we compute $c=C(2^{\kappa'})$.

\begin{theorem}\label{thm:intmul}
Two integers of $n$ bits can be multiplied in a $(m,\ell)$-TCU with $\kappa$ bits operations in time
$$
T(n)=\BO{\frac{n^2}{\kappa^2\sqrt{m}} + \frac{n}{\kappa m}\ell}.
$$
\end{theorem}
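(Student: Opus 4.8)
The plan is to reduce the multiplication of the two $n$-bit integers to a single convolution of two length-$n'$ sequences, and then to realize that convolution as a tall matrix product that the tensor unit computes efficiently via Theorem~\ref{thm:mmult2}. Writing $a=A(2^{\kappa'})$ and $b=B(2^{\kappa'})$ with $\kappa'=\kappa/4$ and $n'=n/\kappa'$, the coefficients of $C(x)=A(x)B(x)$ are the convolution $C_h=\sum_{p+q=h}A_pB_q$, and recovering $c=C(2^{\kappa'})$ then solves the problem. I would organise the argument as: (i) correctness of the matrix encoding $C'=A'\cdot B'$ and of the coefficient-recovery step; (ii) the running-time analysis of the reshaped product; and (iii) the final substitution of $n'$ and $\kappa'$ together with the (already established) absence of overflow when the entries are $\kappa'$-bit integers.

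For step (i), the key observation is that $A'$ is a Hankel band matrix, so that each entry $A'_{i,k}$ depends only on $i+k$ and reads the $A_h$ in reverse order with the stated zero padding for $h<0$ or $h\ge n'$, while $B'$ lays out $B$ in (reversed) column-major form in blocks of $\sqrt m$ consecutive coefficients. Expanding $C'_{i,j}=\sum_{k}A'_{i,k}B'_{k,j}$, each such entry is exactly the partial convolution sum collecting those products $A_pB_q$ whose $B$-index falls in the $j$-th block of width $\sqrt m$; summing the entries $C'_{i,j}$ along the anti-diagonal $2(n'-1)-i-j\sqrt m=h$ then reassembles the full coefficient $C_h$. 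The delicate point --- and the step I expect to be the main obstacle --- is getting this index bookkeeping exactly right: verifying that the reversals in $A'$ and $B'$, the $\sqrt m-1$ zero pads on both ends of the $A$-sequence, and the block offset $j\sqrt m$ combine so that every product $A_pB_q$ with $p+q=h$ is counted once and no spurious products (with shifted index sums) survive. Once the encoding is pinned down, correctness of the recovery is a routine re-indexing.

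For step (ii), I would decompose $B'$ (of size $\sqrt m\times(n'/\sqrt m)$) into $n'/m$ square blocks of size $\sqrt m\times\sqrt m$, so that computing $C'$ amounts to $n'/m$ products of the tall $(n'+\sqrt m-1)\times\sqrt m$ matrix $A'$ by a $\sqrt m\times\sqrt m$ block. By the $(m,\ell)$-TCU cost model each such product takes $\BO{(n'+\sqrt m)\sqrt m+\ell}=\BO{n'\sqrt m+\ell}$ time, since $n'\ge\sqrt m$. Summing over the $n'/m$ products gives $\BO{n'^2/\sqrt m + (n'/m)\ell}$, and the linear-time overhead of assembling $A'$ and $B'$, of summing the $\BO{n'^2/\sqrt m}$ entries of $C'$ into the $2n'-1$ coefficients, and of evaluating $c=C(2^{\kappa'})$ by carry propagation over an $\BO{n}$-bit (i.e.\ $\BO{n/\kappa}$-word) number is dominated by this bound.

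Finally, for step (iii), I substitute $n'=\BT{n/\kappa}$ (from $\kappa'=\kappa/4$), turning $n'^2/\sqrt m$ into $\BT{n^2/(\kappa^2\sqrt m)}$ and $(n'/m)\ell$ into $\BT{n\ell/(\kappa m)}$, which is exactly the claimed bound; the restriction $\kappa'=\kappa/4$ guarantees, as noted before the theorem, that every entry of $C'$ --- a sum of at most $\sqrt m$ products of $\kappa'$-bit integers --- fits in a single $\kappa$-bit word, so the tensor unit never overflows.
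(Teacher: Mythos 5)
Your proposal is correct and follows essentially the same route as the paper: the same polynomial encoding with $\kappa'=\kappa/4$ and $n'=n/\kappa'$, the same Hankel-structured tall matrix $A'$ and reshaped $B'$, the same anti-diagonal recovery of the coefficients $C_h$, and the same cost accounting $\BO{\frac{n'}{m}(n'\sqrt{m}+\ell)}$ followed by the substitution $n'=\BT{n/\kappa}$. Your explicit remark that the $\BO{n'^2/\sqrt{m}}$ cost of summing the entries of $C'$ is dominated by the multiplication cost is, if anything, slightly more careful than the paper's own accounting of that step.
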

\begin{proof}
Let $C_h$ be the coefficient of $C(x)$ of  $x^h$  indeterminate. 
We have:
$$
C_h = \sum_{i,j| i+j=h} a_i b_j = 
\sum_{p=0}^{\lceil h/\sqrt{m}\rceil} \sum_{i=h-(p+1)\sqrt{m}+1}^{h-p\sqrt{m}} a_i b_{h-i}
$$
where as usual $a_i=0$ and $b_i=0$ if $i<0$ or $i\geq n'$.
By definition of matrices $A'$ and $B'$, we can rewrite the previous equation as:
\begin{align*}
C_h &= 
\sum_{p=0}^{\lceil h/\sqrt{m}\rceil}
A'_{n'-h+p\sqrt{m}-1,*} B'_{*,(n'-1)/\sqrt{m}-p} \\
&= \sum_{p=0}^{\lceil h/\sqrt{m}\rceil}
C_{n'-h+p\sqrt{m}-1,(n'-1)/\sqrt{m}-p}
\end{align*}
We observe that the last sum is including all entries in $C_{i,j}$ where $h = 2(n'-1)-i-j\sqrt{m}$, as required in the algorithm.
The algorithm then correctly computes all $C_h$ coefficients and hence $c=a\cdot b$.

We now consider the running time. 
The cost of the matrix multiplication $C'=A'\cdot B'$ is $\BO{\frac{n'}{m}\left(n'\sqrt{m}+\ell\right)}$. 
The cost of computing the $C_h$ coefficients and the final evaluation is upper bounded by $\BO{n'/\sqrt{m}}$. 
The claim follows.
\end{proof}

The Karatsuba algorithm is a well-known algorithm that computes $c=a\cdot b$ by recursively computing three integer multiplications  of size $n/2$ and then combining the solution in $\BO{n/\kappa}$ time. 
If we stop the recursion as soon as the input size is $n\leq k\sqrt{m}$ and solve the subproblem with the algorithm of Theorem~\ref{thm:intmul}, we get the following result.
\begin{theorem}
Two integers of $n$ bits can be multiplied in a $(m,\ell)$-TCU with $\kappa$ bits operations in time
$$
T(n)=\BO{\left(\frac{n}{\kappa \sqrt{m}}^{\log 3}\right) \left(\sqrt{m}+\frac{\ell}{\sqrt{m}}\right)}.
$$
\end{theorem}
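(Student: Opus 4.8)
The plan is to set up the Karatsuba recurrence with the base case handled by Theorem~\ref{thm:intmul}, and then solve it. First I would unroll the standard Karatsuba recursion: to multiply two $n$-bit integers, split each into two halves of $n/2$ bits, recursively compute three products of size $n/2$, and combine the partial results with $\BO{n/\kappa}$ word additions and shifts. Since the combination step is dominated by the recursive work as long as the recursion is deep enough, the recurrence reads
$$
T(n) = \left\{
\begin{array}{ll}
3\, T(n/2) + \BO{n/\kappa} & \textnormal{if } n > k\sqrt{m};\\
\BO{\frac{(k\sqrt{m})^2}{\kappa^2\sqrt{m}} + \frac{k\sqrt{m}}{\kappa m}\ell} & \textnormal{if } n \leq k\sqrt{m},
\end{array}
\right.
$$
where I would fix the stopping threshold; the natural choice suggested by the theorem statement is to stop when the input fits a single tensor pass, i.e.\ when the number $n'=n/\kappa$ of coefficients is $\BT{m}$, so that $n\leq k\sqrt{m}$ with $k=\sqrt{m}$ and the base case of Theorem~\ref{thm:intmul} costs $\BO{m/\kappa^2 \cdot \sqrt{m} + \ell/\kappa \cdot \sqrt{m}/m} = \BO{\sqrt{m}/\kappa^2 \cdot m + \ell/(\kappa\sqrt{m})}$.

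Next I would count the number of leaves and the per-leaf cost. The recursion halves $n$ at each level and stops at threshold $\Theta(\kappa\sqrt{m})$ (measuring in bits), so there are $\BO{\log(n/(\kappa\sqrt{m}))}$ levels and the number of base-case leaves is
$$
L = 3^{\log_2(n/(\kappa\sqrt{m}))} = \left(\frac{n}{\kappa\sqrt{m}}\right)^{\log_2 3}.
$$
Each leaf invokes Theorem~\ref{thm:intmul} on an input of $\BT{\kappa\sqrt{m}}$ bits, which costs $\BO{\sqrt{m} + \ell/\sqrt{m}}$ by substituting $n\leftarrow \kappa\sqrt{m}$ into the base-case formula. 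Multiplying the number of leaves by the per-leaf cost gives exactly the claimed
$$
\BO{\left(\frac{n}{\kappa\sqrt{m}}\right)^{\log 3}\left(\sqrt{m} + \frac{\ell}{\sqrt{m}}\right)}.
$$

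Finally I would verify that the internal combination work does not dominate. The total cost of the $\BO{n/\kappa}$-per-node combination summed over level $i$ (where there are $3^i$ nodes of size $n/2^i$) is $\BO{(3/2)^i\, n/\kappa}$, a geometric series dominated by its last term at the leaf level, which matches (up to constants) the leaf contribution already counted; so the leaf term indeed governs the asymptotics, and the recurrence is leaf-dominated exactly because $\log_2 3 > 1$. The main obstacle I anticipate is bookkeeping the threshold consistently: the theorem writes the exponent base as $n/(\kappa\sqrt{m})$, so I must make sure the base case is entered precisely when a single call to Theorem~\ref{thm:intmul} is cheapest relative to further Karatsuba splitting, and confirm that the per-leaf latency term $\ell/\sqrt{m}$ (rather than a larger $\ell$) is correct — this requires checking that at the base case the number of distinct tensor weight-matrix loads is $\BO{1}$, i.e.\ $n'=\BT{m}$ so that $n'/m=\BT{1}$ in the latency term of Theorem~\ref{thm:intmul}. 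Once the threshold is pinned down, the rest is a routine geometric-series estimate.
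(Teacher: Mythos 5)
Your proof takes essentially the same route as the paper: the paper's entire argument is the recurrence $T(n)=3T(n/2)+\BO{n/\kappa}$ for $n>\kappa\sqrt{m}$ with base case $\BO{\sqrt{m}+\ell/\sqrt{m}}$ obtained by plugging $n=\kappa\sqrt{m}$ into Theorem~\ref{thm:intmul}, and your leaf-counting solution of that recurrence is the standard, correct one. The only wobble is your first attempt at fixing the threshold ($k=\sqrt{m}$, i.e.\ $n'=\Theta(m)$ coefficients), which is inconsistent both with the cost you compute in that sentence and with the threshold of $\Theta(\kappa\sqrt{m})$ bits that you correctly use afterwards; the latter is the paper's choice (its ``$k$'' is $\kappa$) and is the one that yields the stated per-leaf cost $\sqrt{m}+\ell/\sqrt{m}$.
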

\begin{proof}
The running time follows by the  recurrence:
$$
T(n) = \left\{ 
\begin{array}{ll}
3T(n/2) + \BO{n/\kappa}     &  \textnormal{if } n>k\sqrt{m}\\
\BO{\sqrt{m}+\ell/\sqrt{m}}     & \textnormal{if } n\leq k\sqrt{m}
\end{array}
\right.
$$
\end{proof}

\subsection{Batch polynomial evaluation}
We now show how to exploit the TCU for evaluating a given polynomial of $A(x) = \sum_{i=0}^{n-1} a_i x^i$ of degree $n-1$ on $p$ points $p_i$, with $0\leq i < p$. 
For simplicity we assume $n$ to be a multiple of $\sqrt{m}$, $p\geq \sqrt{m}$, and that the polynomial can be evaluated without overflow on the memory word available in the TCU model.

We initially compute for each $p_i$ the powers $p_i^0, p_i^1,\ldots p_i^{\sqrt{m}-1}$ and $p_i^{\sqrt{m}}, p_i^{2\sqrt{m}},\ldots p_i^{n-\sqrt{m}}$, that is  $p_i^j$ for each $j\in \{0,1,\ldots,\sqrt{m}-1\} \cup \{k \sqrt{m}, \forall k\in \{1,\ldots, n/\sqrt{m}-1\}\}$.
We define the following matrices:
\begin{itemize}
    \item A matrix $X$ of size $p\times \sqrt{m}$ where the $i$th row is $X_{i,*} = [p_i^0, p_i^1,\ldots, p_i^{\sqrt{m}-1}]$ for each $0\leq i <p$.
    \item A matrix $A$ of size $\sqrt{m}\times n/\sqrt{m}$ where $A_{i,j} = a_{i+j\sqrt{m}}$ for each $0\leq i <\sqrt{m}$ and $0\leq j<n/\sqrt{m}$. Stated differently, we consider the sequence $a_0,\ldots,a_{n-1}$ as the column major representation of $A$.
\end{itemize}
We then compute the product $C= X\cdot A$ by exploiting the tensor unit. As in the previous section, we decompose $A$ into $\sqrt{m}\times \sqrt{m}$ submatrices and then solve $n/m$ multiplications.
Then, for each $p_i$, the values $A(p_i)$ follows by the sum $\sum_{j=0}^{n/\sqrt{m}-1} C_{i,j} p^{j\sqrt{m}}_i$.

\begin{theorem}\label{thm:batch}
A polynomial of degree $n-1$ can be evaluated on $p$ points on a $(m,\ell)$-TCU with $\kappa$ bits operations in time
$$
T(n,p)=\BO{\frac{p n }{\sqrt{m}}+p\sqrt{m}+\frac{n}{m}\ell}.
$$
\end{theorem}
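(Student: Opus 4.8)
The plan is to first verify that the described algorithm correctly computes each $A(p_i)$, and then to bound the three separate costs—power precomputation, the tensor-unit matrix multiplication, and the final weighted summation—showing that each fits within the claimed bound.

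For correctness, I would expand the entry $C_{i,j}=\sum_{k=0}^{\sqrt{m}-1} X_{i,k} A_{k,j} = \sum_{k=0}^{\sqrt{m}-1} p_i^k\, a_{k+j\sqrt{m}}$ directly from the definitions of $X$ and $A$. Substituting this into the final summation yields
$$
\sum_{j=0}^{n/\sqrt{m}-1} C_{i,j}\, p_i^{j\sqrt{m}} = \sum_{j=0}^{n/\sqrt{m}-1}\sum_{k=0}^{\sqrt{m}-1} a_{k+j\sqrt{m}}\, p_i^{k+j\sqrt{m}}.
$$
The key observation is that $(k,j)\mapsto k+j\sqrt{m}$ is a bijection from $\{0,\ldots,\sqrt{m}-1\}\times\{0,\ldots,n/\sqrt{m}-1\}$ onto $\{0,\ldots,n-1\}$, so the double sum collapses to $\sum_{h=0}^{n-1} a_h\, p_i^h = A(p_i)$, as required.

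For the running time I would charge the pieces in the order the algorithm executes them. The powers $p_i^0,\ldots,p_i^{\sqrt{m}-1}$ are produced per point by iterated multiplication in $\BO{\sqrt{m}}$ time, and the strided powers $p_i^{\sqrt{m}},p_i^{2\sqrt{m}},\ldots,p_i^{n-\sqrt{m}}$ follow by repeatedly multiplying by $p_i^{\sqrt{m}}$ in $\BO{n/\sqrt{m}}$ time; over all $p$ points this is $\BO{p\sqrt{m}+pn/\sqrt{m}}$. The product $C=X\cdot A$ is where the tensor unit is used: splitting $A$ (which is $\sqrt{m}\times n/\sqrt{m}$) column-wise into $\BO{n/m}$ blocks of size $\sqrt{m}\times\sqrt{m}$, each block is multiplied on the left by the tall $p\times\sqrt{m}$ matrix $X$, which the $(m,\ell)$-TCU performs in $\BO{p\sqrt{m}+\ell}$ time via its asymmetric tall-left capability (here the streamed row count is $p\geq\sqrt{m}$, as assumed). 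Summing over the $n/m$ blocks gives $\BO{(n/m)(p\sqrt{m}+\ell)}=\BO{pn/\sqrt{m}+(n/m)\ell}$. The closing weighted sum $\sum_j C_{i,j}\, p_i^{j\sqrt{m}}$ costs $\BO{n/\sqrt{m}}$ per point, hence $\BO{pn/\sqrt{m}}$ overall. Adding the three contributions delivers the stated bound.

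I do not anticipate a serious obstacle, since the reduction to a single tall matrix multiplication is already spelled out; the only points requiring care are confirming that generating the strided powers reuses $p_i^{\sqrt{m}}$ rather than recomputing high powers from scratch—so that precomputation stays within $\BO{p\sqrt{m}+pn/\sqrt{m}}$—and that the tensor-unit cost is charged using the tall-left convention ($p$ streamed rows against a fixed $\sqrt{m}\times\sqrt{m}$ right block) rather than by naively tiling $X$ into square submatrices. It is precisely this convention that keeps the latency contribution at $(n/m)\ell$ and prevents it from inflating.
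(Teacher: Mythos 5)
Your proposal is correct and follows essentially the same route as the paper's proof: the same expansion of $C_{i,j}$ and collapse of the double sum via the index bijection $h=k+j\sqrt{m}$, the same $\BO{p(\sqrt{m}+n/\sqrt{m})}$ charge for the power precomputation and final weighted sums, and the same decomposition of $A$ into $n/m$ square blocks each multiplied by the tall $p\times\sqrt{m}$ matrix $X$ at cost $\BO{p\sqrt{m}+\ell}$. Your added remarks (reusing $p_i^{\sqrt{m}}$ for the strided powers, and charging the tensor calls with the tall-left convention) are exactly the right points of care and are consistent with the paper's argument.
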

\begin{proof}
The correctness follows by observing that:
\begin{align*}
\sum_{j=0}^{n/\sqrt{m}-1} C_{i,j} x^{j\sqrt{m}} &= 
\sum_{j=0}^{n/\sqrt{m}-1}\left( \sum_{k=0}^{\sqrt{m}-1} p_i^k a_{k+j\sqrt{m}} \right) p_i^{j\sqrt{m}}\\
&=\sum_{j=0}^{n/\sqrt{m}-1}\sum_{k=0}^{\sqrt{m}-1} p_i^{k+j\sqrt{m}} a_{k+j\sqrt{m}} \\
&=\sum_{h=0}^{n-1} a_h p_i^h 
= A(p_i).
\end{align*}
The cost of the initial exponentiation and of the final evaluation is $\BO{p (\sqrt{m}+ n/\sqrt{m})}$.
The cost of computing $C$ is given by invoking the tensor unit $n/m$ times on two matrices of size $p\times \sqrt{m}$ and $\sqrt{m}\times \sqrt{m}$, that is $\BO{\frac{p n }{\sqrt{m}}+\frac{n}{m}\ell}$. 
\end{proof}

\section{Relation with the external memory model}\label{sec:emem}
In this section we highlight a relation between the external memory model and the TCU model.
We recall that the external memory model (also named I/O model and almost equivalent to the ideal cache model) is a model capturing the memory hierarchy and it consists of an external memory of potential unbounded size, of an internal memory of $M\geq 1$ words, and a processor. 
The processor can only perform operations with data in the internal memory, and moves (input/output) blocks of $B\geq 1$ words between the external memory and the internal memory.
The I/O complexity of an algorithm for the external memory model is simply the number of blocks moved between the two memories. We refer to the excellent survey in~\cite{Vitter06} for a more exhaustive explanation. 

The time of some of the previous TCU algorithms recall the I/O complexity of the respective external memory algorithms.
For instance, the cost of dense matrix multiplication with only semiring operations (Theorem~\ref{thm:mmult2}) is $\BO{n^{3/2}/\sqrt{m}}$ when $\ell = \BO{1}$, while the I/O complexity for the same problem in the external memory model is $\BO{n^{3/2}/\sqrt{M}}$ when $B=\BO{1}$~\cite{Vitter06}.

We observe that the product between two matrices of size $\sqrt{m}\times \sqrt{m}$ requires $\BO{m}$ I/Os to load and storing the input in an internal memory with $M=3m$ and $B=\BO{1}$. Therefore any call to the tensor unit in a TCU can be simulated in the external memory of size $M=3m$ with $\BT{m}$ I/Os. 
Leveraging on this claim, we show that a lower bound in the external memory model translates into a lower bound in a weaker version of the TCU model.
In the \emph{weak TCU model},  the tensor unit can only multiply matrices of size $\sqrt{m}\times \sqrt{m}$ (i.e., we cannot exploit tall left matrices).
We observe that any algorithm for the original TCU model can be simulated in the weak version with a constant slowdown when $\ell = \BO{m}$: indeed, the multiplication between an $n\times \sqrt{m}$ matrix with a $\sqrt{m}\times \sqrt{m}$ can be executed in the weak model by splitting the $n\times \sqrt{m}$ matrix into $n/\sqrt{m}$ matrices of size $\sqrt{m}\times \sqrt{m}$ and then performing $n/\sqrt{m}$ matrix multiplications with total time $\BO{n\sqrt{m}}$.

\begin{theorem}
Consider a computational problem $\mathcal P$ with a lower bound $F_\mathcal{P}$ on the  I/O complexity in an external memory with memory size $M=3m+O(1)$ and block length $B=1$.
Then, any algorithm for $\mathcal{P}$ in the weak TCU model requires $\BOM{F_{\mathcal{P}}}$ time.
\end{theorem}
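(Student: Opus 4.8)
The plan is to prove the claim by a step-by-step \emph{simulation}: I would show that any weak TCU algorithm $\mathcal{A}$ that solves $\mathcal{P}$ in time $T$ can be turned into an external memory algorithm $\mathcal{A}'$ that solves $\mathcal{P}$ using internal memory of size $M=3m+O(1)$, block length $B=1$, and only $O(T)$ I/Os. Granting such a simulation, the hypothesized I/O lower bound applies to $\mathcal{A}'$, forcing $O(T)\geq F_{\mathcal{P}}$, i.e. $T=\Omega(F_{\mathcal{P}})$, which is exactly the statement. The input of $\mathcal{P}$ initially resides in external memory and the output is written back there, matching the role of main memory in the TCU model, so $\mathcal{A}'$ solves the same instance as $\mathcal{A}$.

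First I would specify the simulation at the level of individual operations. $\mathcal{A}'$ mirrors the TCU's memory contents in external memory and reserves inside its internal memory an $O(1)$-word scratch region together with a $3m$-word working region. A plain CPU/RAM instruction of $\mathcal{A}$ reads and writes only $O(1)$ words; $\mathcal{A}'$ fetches these operands into the scratch region, performs the instruction on internal data (which the external memory processor is allowed to do), and writes the results back, at a cost of $O(1)$ I/Os. Since such an instruction costs $\Omega(1)$ time in the TCU model, its I/O cost is $O(\text{time})$.

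Next I would treat tensor-unit calls. In the weak TCU model each call multiplies two $\sqrt{m}\times\sqrt{m}$ matrices and therefore references only $3m=O(m)$ words, namely the two inputs and the output. Invoking the observation already recorded before the theorem, $\mathcal{A}'$ loads the $2m$ input words into the reserved $3m$-word region, computes the product entirely in internal memory, and writes back the $m$ output words, for $\Theta(m)$ I/Os; crucially, this stays within $M=3m+O(1)$. A tensor call costs $\Theta(m+\ell)=\Omega(m)$ time in the TCU model, so once more its I/O cost is $O(\text{time})$. Thus for a universal constant $c$, every operation of $\mathcal{A}$ is simulated with at most $c$ times as many I/Os as its TCU time; summing over the whole execution yields a total of $O(T)$ I/Os, which completes the simulation and hence the theorem.

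The main obstacle, and the reason the statement is restricted to the \emph{weak} model, is keeping every simulated step within the $3m+O(1)$ internal-memory budget while charging only $O(\text{time})$ I/Os. This is precisely where weakness is used: a weak tensor call touches $O(m)$ words, so it fits in internal memory and admits a clean $\Theta(m)$-I/O simulation matching its $\Omega(m)$ running time. Permitting tall left matrices (the full TCU model) would let a single instruction reference $\omega(m)$ words at once, which no longer fits in $3m$ words and would require a streaming decomposition into $\sqrt{m}\times\sqrt{m}$ blocks; carrying that out without blowing up either the memory budget or the I/O-to-time ratio is the delicate point that the weak-model restriction sidesteps.
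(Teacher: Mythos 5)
Your proposal is correct and follows essentially the same route as the paper's own proof: simulate standard operations with $O(1)$ internal memory at $O(1)$ I/Os each, and simulate each weak tensor call by loading the two $\sqrt{m}\times\sqrt{m}$ operands into the $3m$-word internal memory, multiplying there, and writing back, for $\Theta(m)$ I/Os charged against the call's $\Omega(m)$ running time, so the whole execution yields an external-memory algorithm with $O(T)$ I/Os and the lower bound $F_{\mathcal{P}}$ transfers. The paper phrases this as a contradiction from assuming $T=o(F_{\mathcal{P}})$ rather than directly, but the substance is identical, and your closing remark on why tall left matrices would break the simulation matches the paper's motivation for restricting to the weak model.
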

\begin{proof}
Consider an algorithm for the weak $(m,\ell)$-TCU, and let $T=T_t + T_o$ be the total running time with $T = \SO{F_\mathcal{P}}$: we denote with $T_t$ the running time due to tensor units, and with $T_o$ all the remaining operations.
We can simulate the algorithm on an external memory with $M=3m+O(1)$ as follows.
All standard operations are simulated using $O(1)$ internal memory and incurring $\BO{T_o}$ I/Os.
Each call to the tensor unit can be simulated in the external memory by loading the two input matrices in the internal memory with $\BO{m}$ I/Os, computing the product  with no I/Os, and then writing the result in external memory with  $\BO{m}$ I/Os.
If we have $k$ calls to the tensor unit, the algorithm requires $k m = \BO{T_t}$ I/Os (recall that each call requires $\BT{m}$ time).
Thus the TCU algorithm gives an external memory algorithm with I/O complexity $\BO{T_t+T_o} = \SO{F_{\mathcal{P}}}$, which is a contradiction. 
Therefore, we must have $T=\BOM{F_\mathcal{P}}$.
\end{proof}

\section{Conclusion}
In this paper, we have introduced the first computational model for tensor core units, namely the $(m,\ell)$-TCU model.
We have used this model for designing and analyzing several algorithms from linear algebra, broadening the class of algorithms that benefit of a fast hardware circuit for matrix multiplication.
The paper leaves several open questions:
\begin{itemize}
    \item The TCU model should be experimentally validated. Do experimental performance behave as predicted in the theoretical model? Can we use the TCU model for effectively model Google TPUs and NVIDIA TCs?
    \item Which other computational problems can benefit of a tensor unit? Can existing algorithms for deep learning on tensor cores be further improved?
    \item Hardware accelerators have parallel tensors and low numerical precision. How can we include these features in the TCU model, and to what extent do they affect TCU algorithm design? 
\end{itemize}

\section*{Acknowledgments}
This work was partially supported by NSF grant CNS-1553510, UniPD SID18 grant, PRIN17 20174LF3T8 AHeAd,  MIUR Departments of Excellence, UniBZ-CRC 2019-IN2091 Project, and  INdAM-GNCS Project 2020 NoRMA. 
Some results are based upon work performed at the AlgoPARC Workshop on Parallel Algorithms and Data Structures at the University of Hawaii at Manoa, in part supported by the NSF Grant CCF-1930579.

\bibliographystyle{abbrv}
\balance
\bibliography{biblio}

\end{document}